\title{Improved Purely Additive\\ Fault-Tolerant Spanners\thanks{This work was
partially supported by the Research Grant PRIN 2010 ``ARS TechnoMedia", funded by the Italian Ministry of Education, University, and Research, and by the ERC Starting Grant ``New Approaches to Network Design".}}
\author{Davide Bil\`o\inst{1} \and Fabrizio Grandoni\inst{2}
\and Luciano Gual\`a\inst{3} \and\\ Stefano Leucci\inst{4} \and Guido Proietti\inst{4,5}
\institute{Dipartimento di Scienze Umanistiche e Sociali,
Università di Sassari, Italy \and IDSIA, University of Lugano, Switzerland \and Dipartimento di Ingegneria dell'Impresa,
Università di Roma ``Tor Vergata", Italy \and DISIM, Università degli Studi dell'Aquila, Italy  \and Istituto di Analisi dei Sistemi
  ed Informatica,
  CNR, Roma, Italy \\
E-mail: \texttt{davide.bilo@uniss.it; fabrizio@idsia.ch; guala@mat.uniroma2.it; stefano.leucci@univaq.it; guido.proietti@univaq.it}
}}
\newcommand{\col}{\texttt{color}\xspace}
\newcommand{\cnt}{\texttt{counter}\xspace}
\newcommand{\white}{\texttt{white}\xspace}
\newcommand{\black}{\texttt{black}\xspace}
\newcommand{\red}{\texttt{red}\xspace}
\newcommand{\wdeg}{\ensuremath{\delta_\white}}
\newcommand{\wneigh}{\ensuremath{N_\white}}
\newcommand{\softO}{\widetilde{O}}
\let\doendproof\endproof
\renewcommand\endproof{~\hfill\qed\doendproof}
\newcommand{\hide}[1]{\relax}
\definecolor{darkgreen}{RGB}{0,125,32}
\newcommand{\fab}[1]{\textcolor{red}{#1}}
 \newcommand{\ste}[1]{\textcolor{blue}{#1}}
  \def\rem#1{{\marginpar{\raggedright\scriptsize #1}}}
  \newcommand{\fabr}[1]{\rem{\textcolor{red}{$\bullet$ #1}}}
  \newcommand{\ster}[1]{\rem{\textcolor{blue}{$\bullet$ #1}}}
  \newcommand{\guir}[1]{\rem{\textcolor{darkgreen}{$\bullet$ #1}}}
  \newcommand{\lucr}[1]{\rem{\textcolor{cyan}{$\bullet$ #1}}}
  \newcommand{\davr}[1]{\rem{\textcolor{magenta}{$\bullet$ #1}}}
  \newcommand{\fab}[1]{#1}
  \newcommand{\ste}[1]{#1}
  \newcommand{\fabr}[1]{}
  \newcommand{\ster}[1]{}
  \newcommand{\guir}[1]{}
  \newcommand{\lucr}[1]{}
  \newcommand{\davr}[1]{}
\begin{document}

\pagestyle{plain}	
\maketitle

\begin{abstract}
Let $G$ be an unweighted $n$-node undirected graph. A \emph{$\beta$-additive spanner} of $G$ is a spanning subgraph $H$ of $G$ such that distances in $H$ are stretched at most by an additive term $\beta$ w.r.t. the corresponding distances in $G$. A natural research goal related with spanners is that of designing \emph{sparse} spanners with \emph{low} stretch. 

In this paper, we focus on \emph{fault-tolerant} additive spanners, namely additive spanners which are able to preserve their additive stretch even when one edge fails. We are able to improve all known such spanners, in terms of either sparsity or stretch. In particular, we consider the sparsest known spanners with stretch $6$, $28$, and $38$, and reduce the stretch to $4$, $10$, and $14$, respectively (while keeping the same sparsity). 

Our results are based on two different constructions. On one hand, we show how to augment (by adding a \emph{small} number of edges) a fault-tolerant additive \emph{sourcewise spanner} (that approximately preserves distances only from a given set of source nodes) into one such spanner that preserves all pairwise distances. On the other hand, we show how to augment some known fault-tolerant additive spanners, based on clustering techniques. This way we decrease the additive stretch without any asymptotic increase in their size.
We also obtain improved fault-tolerant additive spanners for the case of one vertex failure, and for the case of $f$ edge failures.
\end{abstract}

\section{Introduction}
We are given an unweighted, undirected $n$-node graph $G=(V(G),E(G))$. Let $d_G(s,t)$ denote the shortest path distance between nodes $s$ and $t$ in $G$. A \emph{spanner} $H$ of $G$ is a spanning subgraph such that $d_H(s,t) \le \varphi(d_G(s,t))$ for all $s$, $t \in V(G)$, where $\varphi$ is the so-called \emph{stretch} or \emph{distortion} function of the spanner.
In particular, when $\varphi(x)=\alpha x + \beta$, for constants $\alpha, \beta$, the spanner is named an ($\alpha$,$\beta$) spanner. If $\alpha=1$, the spanner is called \emph{(purely) additive} or also $\beta$\emph{-additive}. If $\beta=0$, the spanner is called $\alpha$-\emph{multiplicative}.

Finding \emph{sparse} (i.e., with a small number of edges) spanners is a key task in many network applications, since they allow for a small-size infrastructure onto which an efficient (in terms of paths' length) point-to-point communication can be performed.
Due to this important feature, spanners were the subject of an intensive research effort, aiming at designing increasingly sparser spanners with lower stretch. 

However, as any sparse structure, a spanner is very sensitive to possible failures of \emph{components} (i.e., edges or nodes), which may drastically affect its performances, or even disconnect it! Thus, to deal with this drawback, a more robust concept of \emph{fault-tolerant} spanner is naturally conceivable, in which the distortion must be guaranteed even after a subset of components of $G$ fails.

More formally, for a subset $F$ of edges (resp., vertices) of $G$, let $G-F$ be the graph obtained by removing from $G$ the edges (resp., vertices and incident edges) in $F$. When $F=\{x\}$, we will simply write $G-x$. Then, an \emph{$f$-edge fault-tolerant} ($f$-EFT) spanner with distortion $(\alpha, \beta)$, is a subgraph $H$ of $G$ such that, for every set $F \subseteq E(G)$ of at most $f$ failed edges, we have\footnote{Note that in this definition we allow $d_{G-F}(s,t)$ to become infinite (if the removal of $F$ disconnects $s$ from $t$). In that case we assume the inequality to be trivially satisfied.}
\[
	d_{H-F}(s, t) \le \alpha \cdot d_{G-F}(s,t) + \beta \quad \forall s,t \in V(G).
\]
We define similarly an \emph{$f$-vertex fault-tolerant} ($f$-VFT) spanner. For $f=1$, we simply call the spanner edge/vertex fault-tolerant (EFT/VFT).

Chechik et al. \cite{CLPR09} show how to construct a $(2k-1)$-multiplicative $f$-EFT spanner of size $O(f \cdot n^{1+ 1/k})$, for any integer $k\geq 1$. Their approach also works for weighted graphs and for vertex-failures, returning a $(2k-1)$-multiplicative $f$-VFT spanner of size $\softO(f^2 \cdot k^{f+1} \cdot n^{1+1/k})$.\footnote{The $\softO$ notation hides poly-logarithmic factors in $n$.} This latter result has been finally improved through a randomized construction in \cite{DK11}, where the expected size was reduced to $\softO(f^{2-1/k} \cdot n^{1+1/k})$.
For a comparison, the sparsest known $(2k-1)$-multiplicative \emph{standard} (non fault-tolerant) spanners have size $O(n^{1+\frac{1}{k}})$ \cite{DBLP:journals/dcg/AlthoferDDJS93}, and this is believed to be asymptotically tight due to the girth conjecture of Erd\H{o}s \cite{erdHos1964extremal}.

Additive fault-tolerant spanners can be constructed with the following approach by Braunshvig et al \cite{BCP12}. Let $M$ be an $\alpha$-multiplicative $f$-EFT spanner, and $A$ be a $\beta$-additive standard spanner. Then $H=M\cup A$ is a $(2f(2\beta+\alpha-1)+\beta)$-additive $f$-EFT spanner. One can exploit this approach to construct \emph{concrete} EFT spanners as follows.  We know how to construct $6$-additive spanners of size $O(n^{4/3})$ \cite{BKMP10}, randomized spanners that, w.h.p., have size $\softO(n^{7/5})$ and additive distortion $4$ \cite{C13}, and $2$-additive spanners of size $O(n^{3/2})$ \cite{ACIM99}. By setting $f=1$ and choosing $k$ properly, this leads to EFT spanners of size $O(n^{4/3})$ with additive distortion $38$, size $\softO(n^{7/5})$ with additive distortion $28$ (w.h.p.), and size $O(n^{3/2})$ with additive distortion $14$. Finally, using a different approach, Parter \cite{P14} recently presented 2- and 6-additive EFT/VFT spanners of size $\softO(n^{5/3})$ and $\softO(n^{3/2})$, respectively.

\subsection{Our Results.}

In this paper, we focus on additive EFT spanners, and we improve all the known such spanners in terms of sparsity or stretch (see Table \ref{table:new results}). We also present some better results for additive VFT and $f$-EFT spanners.

\begin{table}[t]
\setlength{\tabulinesep}{0.8mm} %vertical padding
\setlength{\tabcolsep}{1.5mm} %horizonal padding
\centering
\caption{State of the art and new results on additive EFT spanners. Distortions and sizes marked with ``*'' hold w.h.p.}
\begin{tabu}{|c|c|c|c|}
	\hline
	\multicolumn{2}{|c|}{State of the art} & \multicolumn{2}{c|}{Our results}\\ \hline
	Size & {\large$\substack{\text{Additive}\\\text{distortion}}$} & Size & {\large$\substack{\text{Additive}\\\text{distortion}}$} \\ \hline
	$\softO(n^{5/3})$ & $2$ \cite{P14}	&$O(n^{5/3})$&$ 2$\\ \hline	
	$\softO(n^{3/2})$ & $6$ \cite{P14} &$O(n^{3/2})$&$4$\\ \hline
	$\softO(n^\frac{7}{5})$* & $28$* \cite{BCP12,C13} & $\softO(n^\frac{7}{5})$* & $10$*\\ \hline
	$O(n^\frac{4}{3})$ & $38$ \cite{BCP12,BKMP10} & $O(n^\frac{4}{3})$ & $14$\\ \hline
\end{tabu}
\label{table:new results}
\end{table}

In more detail, our improved EFT spanners exploit the following two novel approaches. Our first technique (see Section \ref{sec:spanner_algorithm}), assumes that we are given an additive \emph{sourcewise} fault-tolerant spanner $A_S$, i.e., a fault-tolerant spanner that guarantees low distortion only for the distances from a given set $S$ of source nodes. We show that, by carefully choosing $S$ and by augmenting $A_S$ with a conveniently selected \emph{small} subset of edges, it is possible to construct a fault-tolerant spanner (approximately preserving \emph{all} pairwise distances) with a moderate increase of the stretch. This, combined with the sourcewise EFT spanners in \cite{BGLP14,PP13}, leads to the first two results in the table. In particular, we reduce the additive stretch of the best-known spanner of size $\tilde{O}(n^{3/2})$ from $6$ \cite{P14} to $4$ (actually, we also save a polylogarithmic factor in the size here). For the case of stretch $2$, we slightly decrease the size from $\softO(n^{5/3})$ \cite{P14} to $O(n^{5/3})$.
This technique also applies to VFT spanners. In particular, we achieve a $2$-additive VFT spanner of size $O(n^{5/3})$ rather than $\softO(n^{5/3})$ \cite{P14}, and a $4$-additive VFT spanner of size $O(n^{3/2}\sqrt{\log n})$, improving on the $6$-additive  VFT spanner of size $\softO(n^{3/2})$ in \cite{P14}.

Our second technique (see Section \ref{sec:augmenting_clustering_spanners}) relies on some properties of known additive spanners. We observe that some known additive spanners are based on \emph{clustering techniques} that construct a small-enough number of clusters. Furthermore, the worst-case stretch of these spanners is achieved only in some specific cases. We exploit these facts to augment the spanner $H=M\cup A$ based on the already mentioned construction of \cite{BCP12} with a small number of inter and intra-cluster edges. This allows us to reduce the additive stretch without any asymptotic increase in the number of edges.

Finally, for the case of multiple edge failures, we are able to prove that the construction in \cite{BCP12} has in fact an additive stretch of only $2f(\beta + \alpha - 1) + \beta$ (rather than $2f(2\beta + \alpha-1) + \beta$).
\begin{theorem}\label{thm:multifault}
Let $A$ be a $\beta$-additive spanner of $G$, and let $M$ be an $\alpha$-multiplicative $f$-EFT spanner of $G$. The graph $H=(V(G), E(A) \cup E(M))$ is a $(2f(\beta + \alpha - 1) + \beta)$-additive $f$-EFT spanner of $G$. In the special case $f=1$, the additive stretch is at most $2\beta+\alpha-1$.
\end{theorem}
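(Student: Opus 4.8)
The plan is to fix an edge set $F$ with $|F|\le f$ and a pair $s,t$ with $\ell:=d_{G-F}(s,t)<\infty$, fix a shortest path $P=(s=v_0,v_1,\dots,v_\ell=t)$ in $G-F$, and build an $s$--$t$ walk in $H-F$ by interleaving cheap ``jumps'' along $P$ inside the additive spanner $A$ with single-edge ``detours'' inside the fault-tolerant multiplicative spanner $M$. Concretely I would scan $P$ greedily: starting at the current vertex $v_c$ (initially $v_0$), let $j\ge c$ be the last index such that $d_{A-F}(v_c,v_k)\le (k-c)+\beta$ for every $c\le k\le j$. Since $A$ is a $\beta$-additive spanner of $G$ and $d_G(v_c,v_k)\le d_{G-F}(v_c,v_k)=k-c$ (optimal substructure of $P$), the inequality $d_A(v_c,v_k)\le (k-c)+\beta$ always holds, so the scan can stop only because \emph{every} short $A$-path to $v_{j+1}$ is destroyed by $F$. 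If $j=\ell$ we append a shortest $A-F$ path to $t$ and stop; otherwise we append a shortest $A-F$ path from $v_c$ to $v_j$ (length $\le (j-c)+\beta$, lying in $A-F\subseteq H-F$), then detour from $v_j$ to $v_{j+1}$ inside $M-F$: this costs at most $\alpha$ because the edge $v_jv_{j+1}$ of $P$ survives in $G-F$, so $d_{M-F}(v_j,v_{j+1})\le \alpha\cdot d_{G-F}(v_j,v_{j+1})=\alpha$ and $M-F\subseteq H-F$. We then continue the scan from $v_{j+1}$.

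If this procedure performs $m$ detours, with successive starting indices $0=c_0<c_1<\dots<c_m$ and $c_r=j_r+1$, the total walk length is $\sum_{r=1}^m\big[(j_r-c_{r-1})+\beta+\alpha\big]+\big[(\ell-c_m)+\beta\big]$, which telescopes (using $j_r=c_r-1$ and $c_0=0$) to exactly $\ell+\beta+m(\beta+\alpha-1)$. Hence it suffices to prove $m\le 2f$ in general, which gives $d_{H-F}(s,t)\le d_{G-F}(s,t)+2f(\beta+\alpha-1)+\beta$, and $m\le 1$ when $f=1$, which gives $d_{G-e}(s,t)+2\beta+\alpha-1$. This is already the source of the improvement over \cite{BCP12}: routing each detour through a \emph{single} surviving edge of $P$ inside $M$ costs only an extra $\alpha-1$, whereas re-invoking the additive spanner $A$ to get around a failed edge would cost an extra $\beta$; saving $\beta$ per detour turns the per-detour price $2\beta+\alpha-1$ implicit in \cite{BCP12} into $\beta+\alpha-1$.

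The hard part is the combinatorial bound on $m$. Each detour occurs because the shortest $A$-path $Q_r$ from $v_{c_{r-1}}$ to $v_{j_r+1}$, of length at most $(j_r+1-c_{r-1})+\beta$, must use a failed edge; I would cut $Q_r$ at its \emph{first} failed edge $g_r=(x_r,y_r)$, so that the prefix of $Q_r$ reaching $x_r$ avoids $F$ and lies in $H-F$. The idea is then to charge the $r$-th detour to the ordered incidence $(g_r,x_r)$, of which there are only $2f$, and to argue that no incidence is charged twice: if rounds $r<r'$ both charged $(g,x)$, one splices the $F$-avoiding prefix of $Q_r$ (from $v_{c_{r-1}}$ to $x$) with an $F$-avoiding sub-walk extracted from $Q_{r'}$ to obtain, inside $A-F$, a sufficiently short path from $v_{c_{r-1}}$ to a vertex of $P$ lying strictly beyond $v_{j_r}$ — contradicting the maximality of $j_r$ in round $r$. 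Making this splicing precise (and, in the same vein, checking that a single failed edge forces at most one detour) is exactly the delicate point; it is essentially the combinatorial core already present in the analysis of \cite{BCP12}, and the theorem follows by re-running that bookkeeping on top of the cheaper detour mechanism above, with the $f=1$ clause being the instance $m\le 1$.
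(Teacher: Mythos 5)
Your high-level plan matches the paper's in spirit (walk along $\pi_{G-F}(s,t)$, use $A$ on long stretches and cross each blockage through a single surviving edge of the path inside $M$, so that each detour costs only $\beta+\alpha-1$), and your telescoping computation reducing everything to ``$m\le 2f$ detours in general, $m\le 1$ for $f=1$'' is correct bookkeeping. But that bound on $m$ \emph{is} the theorem, and your proposal does not prove it. The charging argument you sketch does not go through: if rounds $r<r'$ charge the same ordered incidence $(g,x)$, splicing the $F$-free prefix of $Q_r$ (from $v_{c_{r-1}}$ to $x$) with the reversed $F$-free prefix of $Q_{r'}$ (from $v_{c_{r'-1}}$ to $x$) yields a fault-free path in $A$ whose length the spanner inequalities only bound by the relevant distance plus roughly $2\beta$, not plus $\beta$, so it does not contradict your greedy threshold. (This is precisely the situation of Lemma~\ref{lemma:path_path_shortcut} in the paper, which achieves only additive error $2\beta$.) Indeed the paper does \emph{not} obtain an injection of detours into the $2f$ edge-endpoint incidences; it partitions the path into blocks via classes, proves the four shortcut lemmas (Lemmas~\ref{lemma:single_block_shortcut}--\ref{lemma:path_path_shortcut}), and then balances the expensive $2\beta$-type ``bad'' shortcuts against the fact that each one consumes a directed failed edge that is the class of no block, which is where the bound $\max_{0\le\lambda\le 2f-k}\bigl(\lambda(2\beta+\alpha-1)+(k-\lambda)(\beta+\alpha-1)\bigr)\le 2f(\beta+\alpha-1)$ comes from. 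Deferring this to ``re-running the bookkeeping of \cite{BCP12}'' is not available: that bookkeeping as published yields only $2f(2\beta+\alpha-1)+\beta$, and redoing it with the cheaper detours is exactly the content of the paper's appendix.

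The $f=1$ clause exposes the same gap in sharper form: you need $m\le 1$, but nothing in your forward greedy scan rules out two detours caused by the single edge $e=(x,y)$. A splicing/orientation argument excludes only the case where the blocked $A$-paths of the two rounds cross $e$ in the same direction; if the short $s$--$v_{j+1}$ paths all cross it as $x\to y$ and the short $v_{j+1}$--$v_k$ paths all cross it as $y\to x$, no contradiction arises, and two detours would give $3\beta+2\alpha-2>2\beta+\alpha-1$. The paper avoids this by choosing the split vertex non-greedily (Lemma~\ref{lemma:spanner_union_one_fault_structure}): $z$ is the \emph{last} vertex of $\pi_{G-e}(s,t)$ from which some shortest $A$-path to $t$ uses $e$, and a direction argument shows all shortest $A$-paths $s\to z$ and $z'\to t$ are fault-free, so a single $M$-detour over the surviving edge $(z,z')$ suffices. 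You would need either this backward-looking choice or a proof that one edge cannot block your greedy twice; as written the claim is unsupported. (Minor: your two descriptions of $j$ -- ``first index where the threshold fails'' versus the furthest-reach maximality you later invoke -- are inconsistent; that is fixable, but the counting gap above is the real issue.)
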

We see this as an interesting result since, to the best of our knowledge, the construction of \cite{BCP12} is the only known approach for building additive spanners withstanding more than a single edge fault. At a high-level, in \cite{BCP12} the shortest path in $G-F$ between two vertices is decomposed into (roughly $2f$) subpaths called \emph{blocks}. The authors then show that it is possible to build a \emph{bypass} (i.e., a fault-free path) in $H$ between the endpoints of each block such that the additive error incurred by using this path is at most $\beta + \alpha - 1$.
Actually, in addition to those intra-block bypasses, the spanner $H$ contains some inter-block \emph{shortcuts}, that are exploited in order to prove a better distortion. The proof of this result is given in the appendix. %Appendix

\subsection{Related Work}

A notion closely relate to fault-tolerant spanners is the one of 
\emph{Distance Sensitivity Oracles} (DSO). The goal here is to compute, with a \emph{low} preprocessing time, a \emph{compact} data structure which is able to  \emph{quickly} answer distance queries following some component failures (possibly in an approximate way). 
For recent achievements on DSO, we refer the reader to \cite{BK09,CLPR10,BK13,GW12}. 

Another setting which is very close in spirit to fault-tolerant spanners is the recent work on fault-tolerant \emph{approximate shortest-path trees}, both for unweighted~\cite{PP14} and for weighted  \cite{BK13,BGLP14} graphs. In \cite{AFIR13} it was introduced the resembling concept of \emph{resilient spanners}, i.e., spanners that approximately preserve the \emph{relative} increase of distances due to an edge failure.

There was also some research (see for example \cite{CE06,CGK13}) on spanners approximately preserving the distance from a given set of nodes (\emph{sourcewise} spanners), among a given set of nodes (\emph{subsetwise} spanners), or between given pairs of nodes (\emph{pairwise} spanners). In this framework a spanner is called a \emph{preserver} if distances are preserved (in other words, the stretch function is the identity function). In particular, in one of our constructions we exploit a fault-tolerant version of a sourcewise preserver.

\subsection{Notation}

Given an unweighted, undirected graph $G$, let us denote by $\pi_G(u,v)$ a shortest path) between $u$ and $v$ in $G$. When the graph $G$ is clear from the context we might omit the subscript. 
Given a simple path $\pi$ in $G$ and two vertices $s,t \in V(\pi)$, we define $\pi[s,t]$ to be the subpath of $\pi$ connecting $s$ and $t$. Moreover, we denote by $|\pi|$ the \emph{length} of $\pi$, i.e., the number of its edges.
When dealing with one or multiple \emph{failed edges}, we say that a path $\pi$ is \emph{fault-free} if it does not contain any of such edges.
Finally, if two paths $\pi$ and $\pi^\prime$ are such that the last vertex of $\pi$ coincides with the first vertex of $\pi^\prime$, we will denote by $\pi \circ \pi^\prime$ the path obtained by concatenating $\pi$ with $\pi^\prime$.

\section{Augmenting Sourcewise Fault-Tolerant Spanners}
\label{sec:spanner_algorithm}

We next describe a general procedure (see Algorithm \ref{alg:b+2_ft_spanner}) to derive purely additive fault-tolerant spanners from sourcewise spanners of the same type.

The main idea of the algorithm is to select a small subset $S$ of \emph{source} vertices of $G$, which we call \red. These vertices are used to build a fault-tolerant sourcewise spanner of the graph.
The remaining vertices are either \black or \white. The former ones are always adjacent to a source, even in the event of $f$ edge/vertex failures. Finally, edges incident to \white vertices are added to the sought spanner, as their overall number is provably small.

During the execution of the algorithm, we let $\col(u) \in \{ \white, \black, \red \}$ denote the current color of vertex $u$.
We define $\wneigh(u)$ to be the set of neighbors of $u$ which are colored \white, and we let $\wdeg(u)=|\wneigh(u)|$.
We will also assign a non-negative counter $\cnt(u)$ to each vertex. Initially all these counters will be positive, and then they will only be decremented.  A vertex $u$ is colored $\black$ only when  $\cnt(u)$ reaches $0$, and once a \white vertex is colored either \black or \red it will never be recolored \white again. Therefore, we have that $\col(u)=\black$ implies $\cnt(u)=0$.

\begin{algorithm}[t]
	\DontPrintSemicolon
		
	$\col(v) \gets \white \quad \forall v \in V$; $\cnt(v) \gets f+1 \quad \forall v \in V$\;
	$S \gets \emptyset$; $E' \gets \emptyset$ \;
	
	\BlankLine	
	
	\While{$\exists s \in V \setminus S : \wdeg(s) \ge p$}{
		$S \gets S \cup \{ s \}$ \tcc*{Add a new source $s$}
		$\col(s) \gets \red$ \;
		\ForEach{$u \in \wneigh(s)$}{
			$\cnt(u) \gets \cnt(u)-1$ \;
			$E' \gets E' \cup \{ (s, u) \}$ \label{ln:dominating_edges} \;
			\If{$\cnt(u)=0$}{
				$\col(u) \gets \black$ \;
			}
		}
	}

	\BlankLine
	
	$E' \gets E' \cup \{ (u,v) \in E : \col(u)=\white \}$ \label{ln:white_edges} \;
	$A_S \gets \text{$\beta$-additive f-EFT/VFT sourcewise spanner w.r.t. sources in $S$}$ \;
	\Return $H \gets (V(G),E'\cup E(A_S))$

	\caption{Algorithm for computing a fault-tolerant additive spanner of $G$ from a $\beta$-additive $f$-EFT/VFT sourcewise spanner. The parameter $p$ affects the size of the returned spanner and it will be suitably chosen.}\label{alg:b+2_ft_spanner}
\end{algorithm}

We first bound the size of the spanner $H$.
\begin{lemma}\label{lem:sourcewise:size}
Algorithm \ref{alg:b+2_ft_spanner} computes a spanner of size $O\left(np + nf + \gamma\left(n, \left\lfloor \frac{(f+1)n}{p} \right\rfloor \right) \right)$, where $\gamma(n,\ell)$ is the size of the spanner $A_S$ for $|S|=\ell$.
\end{lemma}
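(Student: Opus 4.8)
The plan is to bound separately the three groups of edges that make up $H = (V(G), E' \cup E(A_S))$: the ``dominating'' edges added on line \ref{ln:dominating_edges} (source-to-white-neighbor edges), the edges incident to white vertices added on line \ref{ln:white_edges}, and the edges of the sourcewise spanner $A_S$. The last of these is immediately $\gamma(n,|S|)$ by definition, so the real work is bounding the first two quantities and, crucially, bounding $|S|$.

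\textbf{Bounding $|S|$ and the dominating edges.} First I would observe that the counters give a clean global accounting. Each vertex $u$ starts with $\cnt(u)=f+1$ and the counter is decremented exactly once for every source $s$ that has $u$ as a white neighbor at the moment $s$ is selected; moreover each such decrement corresponds to exactly one edge $(s,u)$ added to $E'$ on line \ref{ln:dominating_edges}. Since counters are non-negative and never increase, the total number of decrements over all vertices is at most $\sum_{u\in V}(f+1) = (f+1)n$, so at most $(f+1)n$ edges are added on line \ref{ln:dominating_edges}; this gives the $nf$ term (absorbing the $+n$). On the other hand, each time a source $s$ is selected it satisfies $\wdeg(s)\ge p$ at that moment, so it contributes at least $p$ decrements; hence $p\cdot|S| \le (f+1)n$, i.e. $|S| \le \lfloor (f+1)n/p \rfloor$. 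Feeding this into $\gamma(n,\cdot)$ and using monotonicity of $\gamma$ in its second argument yields the $\gamma\bigl(n,\lfloor (f+1)n/p\rfloor\bigr)$ term.

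\textbf{Bounding the white-incident edges.} For line \ref{ln:white_edges} I would argue that when the while loop terminates, every white vertex $u$ has $\wdeg(u) < p$ (otherwise the loop would not have stopped, since $u \in V\setminus S$ as white vertices are never sources). The set $\{(u,v)\in E : \col(u)=\white\}$ is contained in the set of edges having at least one white endpoint, and such an edge is counted at most twice (once per white endpoint). The number of edges with at least one white endpoint is at most $\sum_{u:\,\col(u)=\white}\wdeg(u) < np$, where $\wdeg$ here refers to the final coloring; more carefully, one should note that a vertex colored white at the end had white-degree $<p$ at termination and edges to non-white vertices are also bounded by this count, so the total is $O(np)$. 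Combining the three bounds gives $|E(H)| = O\bigl(np + nf + \gamma(n,\lfloor(f+1)n/p\rfloor)\bigr)$.

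\textbf{Expected main obstacle.} The subtle point is the white-edge count on line \ref{ln:white_edges}: the quantity $\wdeg(u)$ changes over time as neighbors get recolored, and one must be careful that the ``$\wdeg(u)<p$ at termination'' bound is the right one to use, and that every edge added on line \ref{ln:white_edges} really has a white endpoint with final white-degree bounded by $p$. One has to check that an edge $(u,v)$ with $\col(u)=\white$ (final coloring) and $v$ non-white is still accounted for — it is, because it is incident to the white vertex $u$ whose total incident edges to white \emph{and} non-white neighbors must be bounded; the cleanest formalization is to bound $|\{(u,v)\in E:\col(u)=\white\}| \le \sum_{u:\col(u)=\white} \deg_G(u)$ is too weak, so instead one bounds it by $2\cdot|\{(u,v)\in E : \col(u)=\col(v)=\white\}| + |\{(u,v)\in E: \text{exactly one of }u,v\text{ white}\}|$ and controls each term via the termination condition. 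Everything else is routine counter bookkeeping.
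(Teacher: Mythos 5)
Your accounting for the sources and for the line~\ref{ln:dominating_edges} edges is exactly the paper's argument (each added edge corresponds to one counter decrement, at most $(f+1)n$ decrements occur in total, and each selected source costs at least $p$ decrements, giving $|S|\le\lfloor (f+1)n/p\rfloor$), and that part is fine. The gap is in your bound for line~\ref{ln:white_edges}. The inequality ``the number of edges with at least one white endpoint is at most $\sum_{u:\,\col(u)=\white}\wdeg(u)$'' is false: $\wdeg(u)$ counts only the \white neighbors of $u$, so this sum accounts (twice) for white--white edges and for nothing else. An edge $(u,v)$ with $u$ \white and $v$ \black or \red is invisible to it, and your attempted repair --- charging such edges to the white endpoint $u$, whose ``total incident edges to white and non-white neighbors must be bounded'' --- has no justification: nothing in the algorithm bounds the number of \black neighbors of a \white vertex (a white vertex may well be adjacent to $\Theta(n)$ black vertices), and the termination condition $\wdeg(s)<p$ applies only to vertices of $V\setminus S$, hence says nothing about \red endpoints either, since those were selected precisely because their white degree was at least $p$.

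The correct counting goes through the \emph{other} endpoint $v$, as in the paper. If $v$ is \red, then $u$, being \white at the end, was already \white when $v$ was selected (a vertex recolored away from \white is never recolored \white again), so the edge $(u,v)$ was added at line~\ref{ln:dominating_edges} and is charged to the $(f+1)n$ term. If $v$ is \white or \black, then $v\in V\setminus S$, so at termination of the while loop $\wdeg(v)<p$; since $u\in\wneigh(v)$, each such $v$ is the endpoint of fewer than $p$ edges of this kind, giving at most $np$ edges overall. (Alternatively, white--\red edges can be bounded by $nf$ directly: a vertex that is still \white has had $\cnt$ decremented at most $f$ times, hence has at most $f$ \red neighbors --- but some argument of this sort is needed; the termination condition alone does not close that case.) With this repair your proof coincides with the paper's.
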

\begin{proof}

The edges added to $E'$ by line~\ref{ln:dominating_edges} are at most $(f+1)n$, since each time an edge $(s,u)$ is added to $H$ the counter $\cnt(u)$ is decremented, and at most $(f+1)n$ counter decrements can occur. 

To bound the edges added to $E'$ by line \ref{ln:white_edges}, observe that all the edges in $\{ (u,v) \in E : \col(u)=\white \}$ which are incident to a \red vertex, have already been added to $H$ by line \ref{ln:dominating_edges}, hence we only consider vertices $v$ which are either \white or \black. Let $v$ be such a vertex and notice that, before line \ref{ln:white_edges} is executed, we must have $\wdeg(v) < p$, as otherwise $v$ would have been selected as a source and colored \red. This immediately implies that line \ref{ln:white_edges} causes the addition of  at most $np$ edges to $H$.

It remains to bound the size of $A_S$. It is sufficient to show that $|S| \le \left\lfloor \frac{(f+1)n}{p} \right\rfloor$ at the end of the algorithm. 
Each time a source $s$ is selected, $s$ has at least $p$ white neighbors in $G$, hence the quantity $\sum_{u \in V(G)} \cnt(u)$ decreases by at least $p$.
	The claim follows by noticing that $\sum_{u \in V(G)} \cnt(u)=n(f+1)$ at the beginning of the algorithm and, when the algorithm terminates, it must be non-negative.
\end{proof}

\begin{figure}[!tb]
	\centering
	\includegraphics[scale=1.2]{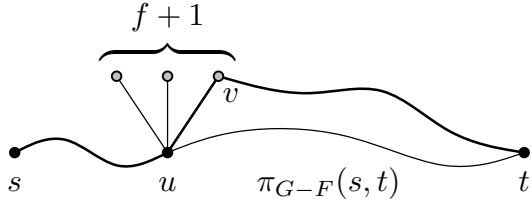}
	\caption{A case of the proof of Theorem~\ref{thr:beta_2_spanner}. Bold edges are in $H-F$. The black vertex $u$ is adjacent to at least $f+1$ vertices in $S$.}\label{fig:spanner-analysis}
\end{figure}

We next bound the distortion of $H$. 	
\begin{theorem}
	\label{thr:beta_2_spanner}
	Algorithm \ref{alg:b+2_ft_spanner} computes a $(\beta+2)$-additive $f$-EFT/VFT spanner. 
\end{theorem}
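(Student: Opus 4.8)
The plan is to show that for any pair $s,t \in V(G)$ and any fault set $F$ (of $\le f$ edges, or vertices in the VFT case), we have $d_{H-F}(s,t) \le d_{G-F}(s,t) + \beta + 2$. First I would fix a shortest path $\pi = \pi_{G-F}(s,t)$ in $G-F$ and walk along it. If every internal vertex of $\pi$ is \white, then all edges of $\pi$ were added to $E'$ in line~\ref{ln:white_edges} (each edge has a \white endpoint), so $\pi$ itself survives in $H-F$ and the distortion is $0$. The interesting case is when $\pi$ contains a non-\white (i.e.\ \black or \red) internal vertex.

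The key structural fact I would establish is: every \black or \red vertex $u$ has at least $f+1$ neighbors in $S$ \emph{via edges that are present in $H$}, and these edges are in $E'$ (added in line~\ref{ln:dominating_edges}). Indeed, a vertex becomes \black only when $\cnt(u)$ drops from $f+1$ to $0$, which happens through $f+1$ distinct executions of line~\ref{ln:dominating_edges}, each corresponding to a distinct source $s$ that had $u$ as a \white neighbor at that time and each of which put the edge $(s,u)$ into $E'$. A \red vertex $u$ is itself a source, but before being selected it had $\wdeg(u) \ge p \ge 1$ white neighbors; more to the point, one should handle \red vertices by noting they lie in $S$ and the sourcewise spanner $A_S$ directly controls distances from them. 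So the robust statement is: for any \black vertex $u$ and any fault set $F$ with $|F|\le f$, at least one of the $f+1$ edges $(s_1,u),\dots,(s_{f+1},u) \in E'$ survives in $H-F$, hence $u$ has a neighbor $s \in S$ in $H-F$; and any \red vertex is itself in $S$.

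With this in hand, the routing argument goes as follows. Let $a$ be the first non-\white vertex along $\pi$ from $s$, and $b$ the last non-\white vertex along $\pi$. Then $\pi[s,a]$ and $\pi[b,t]$ consist only of \white internal vertices (except possibly the endpoints $s,t$ themselves, which if non-\white we treat as below), so these two sub-paths lie entirely in $H-F$ by line~\ref{ln:white_edges}. Now I use the source structure at $a$ and $b$: pick $s_a \in S$ adjacent to $a$ in $H-F$ (if $a$ is \red take $s_a = a$) and similarly $s_b \in S$ adjacent to $b$. Since $A_S$ is a $\beta$-additive $f$-EFT/VFT sourcewise spanner with sources $S$, it preserves the distance from $s_a$: $d_{H-F}(s_a, b) \le d_{A_S - F}(s_a, b) \le d_{G-F}(s_a, b) + \beta$. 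Combining the detour $a \to s_a$ (one edge), the sourcewise-spanner path $s_a \leadsto b$, and then the white tails, and bounding $d_{G-F}(s_a,b) \le 1 + d_{G-F}(a,b)$, yields an additive overhead of $1 + \beta + 1 = \beta + 2$ over $|\pi[a,b]| \le d_{G-F}(s,t)$; the white tails $\pi[s,a]$ and $\pi[b,t]$ contribute exactly their own length with no error. One must double-check the edge cases where $a = b$, where $s$ or $t$ is itself non-\white (then take $a=s$ or $b=t$ and the corresponding tail is empty), and where $s_a$ or $s_b$ happens to coincide with one of $s,t$; all of these only improve the bound.

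The main obstacle I expect is the bookkeeping around \emph{which} edges incident to a non-\white vertex actually survive the faults, and making sure the two detour edges $(a,s_a)$ and $(s_b,b)$ are simultaneously available in $H-F$ together with the path inside $A_S$. The clean way to handle this is to note that the $f+1$ source-edges at $a$ and the $f+1$ at $b$ give enough redundancy that $F$ (size $\le f$) cannot kill all of either bundle, and that $A_S$ is itself fault-tolerant against the \emph{same} $F$, so the sourcewise guarantee $d_{A_S - F}(s_a,b) \le d_{G-F}(s_a,b)+\beta$ holds verbatim. A second, more delicate point in the VFT case: a failed vertex could be $s_a$ itself — but then it is not available as a neighbor, so one should choose $s_a$ among the surviving sources adjacent to $a$, which is possible precisely because $a$ has $f+1$ source-neighbors and at most $f$ of them (or their connecting edges) can be removed. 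Once these redundancy arguments are spelled out, the distance computation is the short routine chain above.
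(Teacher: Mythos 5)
Your proof is correct and follows essentially the same route as the paper's: the all-\white prefix of $\pi_{G-F}(s,t)$ surviving via line~\ref{ln:white_edges}, the $(f+1)$-fold redundancy of line-\ref{ln:dominating_edges} edges at a \black vertex (so some edge to a \red source survives any $F$ with $|F|\le f$, in both the EFT and VFT settings), and the fault-tolerant sourcewise guarantee of $A_S$, giving the same $1+\beta+1$ accounting. The only cosmetic difference is that you route through $A_S$ only up to the last non-\white vertex $b$ and then follow the white suffix of $\pi$, whereas the paper stays in $A_S$ all the way to $t$; both yield the $\beta+2$ bound.
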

\begin{proof}
	Consider two vertices $s,t \in V(G)$ and a set $F$ of at most $f$ failed edges/vertices of $G$, we will show that $d_{H-F}(s,t) \le d_{G-F}(s,t) + \beta + 2$. We assume, w.l.o.g., that $s$ and $t$ are connected in $G-F$, as otherwise the claim trivially holds.
	
	If all the vertices in $\pi = \pi_{G-F}(s,t)$ are \white, then all their incident edges have been added to $H$ (see line \ref{ln:white_edges} of Algorithm~\ref{alg:b+2_ft_spanner}), hence $d_{H-F}(s,t) = d_{G-F}(s,t)$.
	
	Otherwise, let $u \in V(\pi)$ be the closest vertex to $s$ such that $\col(u) \neq \white$. Notice that, by the choice of $u$, $H-F$ contains all the edges of $\pi[s, u]$. If $\col(u) = \red$ then:
	\begin{align*}
		d_{H - F}(s,t) & \le d_{H - F}(s,u) + d_{H - F}(u,t) \le  d_{G - F}(s,u) + d_{A_S - F}(u,t) \\
		& \le d_{G - F}(s,u) + d_{G - F}(u,t) + \beta = d_{G-F}(s,t) + \beta
	\end{align*}
	\noindent where we used the fact that $u \in \pi_{G-F}(s,t)$.

	Finally, if	$\col(u) = \black$ then $\cnt(u)=0$, hence $u$ has at least $f+1$ \red neighbors in $H$ (see Figure~\ref{fig:spanner-analysis}). As a consequence, there is at least one \red vertex $v$ such that $(u,v) \in H - F$ (and hence $(u,v) \in G - F$), therefore:
	\begin{align*}
		d_{H - F}(s,t) & \le d_{H - F}(s,u) + d_{H - F}(u,v) + d_{H - F}(v,t) \\
	& 	\le d_{G - F}(s,u) + 1 + d_{A_S - F}(v,t) \le d_{G - F}(s,u) + 1 + d_{G - F}(v,t) + \beta  \\
	&	\le d_{G - F}(s,u) + 1 + d_{G - F}(v,u) + d_{G - F}(u,t) + \beta\\
	& =  d_{G-F}(s,t) + \beta + 2.
	\end{align*}
\end{proof}

Let $S^\prime \subset V(G)$ be a set of \emph{sources}. In \cite{PP13} it is shown that a sourcewise EFT/VFT \emph{preserver} (i.e, a $(1,0)$ EFT/VFT sourcewise spanner) of $G$ having size $\gamma(n, |S^\prime|) = O( n \sqrt{n|S^\prime|})$ can be built in polynomial time. Combining this preserver and Algorithm~\ref{alg:b+2_ft_spanner} with $p=n^\frac{2}{3}$, we obtain the following:
\begin{corollary}
There exists a polynomial time algorithm to compute a $2$-additive EFT/VFT spanner of size $O(n^{\frac{5}{3}})$.
\end{corollary}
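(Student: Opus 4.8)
The plan is to instantiate Algorithm~\ref{alg:b+2_ft_spanner} with the sourcewise EFT/VFT \emph{preserver} of \cite{PP13} in the role of $A_S$. Since a sourcewise $(1,0)$ EFT/VFT spanner is precisely a $\beta$-additive $f$-EFT/VFT sourcewise spanner with $\beta=0$ (and here $f=1$), Theorem~\ref{thr:beta_2_spanner} applies directly and tells us that, for \emph{any} choice of the threshold $p$, the returned graph $H$ is a $(\beta+2)=2$-additive EFT/VFT spanner of $G$. It thus only remains to choose $p$ so as to make the size bound of Lemma~\ref{lem:sourcewise:size} as small as possible.

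To this end I would substitute $f=1$ and $\gamma(n,\ell)=O(n\sqrt{n\ell})$ into Lemma~\ref{lem:sourcewise:size}. The sourcewise spanner is built on $|S| \le \lfloor 2n/p \rfloor$ sources, so its size is $O\!\big(n\sqrt{n\cdot 2n/p}\big)=O(n^2/\sqrt p)$, and the total size of $H$ becomes
\[
O\!\left(np + n + \frac{n^2}{\sqrt p}\right) = O\!\left(np + \frac{n^2}{\sqrt p}\right).
\]
The first and last terms are balanced when $np = n^2/\sqrt p$, i.e.\ $p^{3/2}=n$, which asymptotically means $p=n^{2/3}$. With this choice both terms equal $n^{5/3}$, hence $|E(H)|=O(n^{5/3})$, as required.

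For the running time I would observe that the while-loop of Algorithm~\ref{alg:b+2_ft_spanner} executes at most $|S|=O(n^{1/3})$ times (each iteration adds one source and, by the counting argument in the proof of Lemma~\ref{lem:sourcewise:size}, drains $\sum_u \cnt(u)$ by at least $p$, starting from $n(f+1)$), while each loop iteration, the construction of $E'$, and the final union are trivially polynomial; together with the polynomial-time construction of the preserver from \cite{PP13}, the whole algorithm runs in polynomial time. The one point I would check with some care — and the only non-routine step — is that the preserver of \cite{PP13} really tolerates a single \emph{vertex} failure and not merely a single edge failure, so that the \emph{same} instantiation yields both the EFT and the VFT versions of the statement; everything else is a mechanical substitution into Lemma~\ref{lem:sourcewise:size} and Theorem~\ref{thr:beta_2_spanner}.
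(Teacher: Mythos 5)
Your proposal is correct and follows essentially the same route as the paper: instantiate Algorithm~\ref{alg:b+2_ft_spanner} with the $(1,0)$ EFT/VFT sourcewise preserver of \cite{PP13} (so $\beta=0$, giving stretch $2$ by Theorem~\ref{thr:beta_2_spanner}) and choose $p=n^{2/3}$ to balance the terms in Lemma~\ref{lem:sourcewise:size}, yielding size $O(n^{5/3})$. Your one point of caution is already settled by the cited result, since \cite{PP13} provides the preserver for both a single edge and a single vertex failure, exactly as the paper uses it.
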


Furthermore, we can exploit the following result in \cite{BGLP14}.\footnote{Actually, the result in \cite{BGLP14} is claimed for the single source case only, but it immediately extends to multiple sources.}
\begin{lemma}[\cite{BGLP14}]\label{lem:GBLP14}
Given an $(\alpha, \beta)$-spanner $A$ and a subset $S^\prime$ of vertices, it is possible to compute in polynomial time a subset of $O(|S^\prime| \cdot n)$ edges $E'$, so that $A\cup E'$ is an $(\alpha, \beta)$ EFT sourcewise spanner w.r.t. $S^\prime$. The same result holds for VFT spanners, with $E'$ of size $O(|S^\prime| \cdot n \log n )$.
\end{lemma}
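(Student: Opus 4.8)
The plan is to prove the single-source version (the case $|S'|=1$) and then obtain the general statement by applying it once to each source in $S'$ and taking the union of the resulting edge sets $E'_s$; since each has size $O(n)$, the total is $O(|S'|\cdot n)$, and $A\cup\bigcup_{s\in S'}E'_s$ contains each $A\cup E'_s$, hence has the required distortion from every source. So fix a source $s$. I would build a BFS tree $T_s$ of $G$ rooted at $s$ and put all of $E(T_s)$ into $E'$ (this is $n-1$ edges, and it already settles every failure $e\notin E(T_s)$ as well as every pair $(s,v)$ for which the $s$--$v$ path in $T_s$ avoids $e$, since then $d_{(A\cup E')-e}(s,v)\le d_{T_s}(s,v)=d_G(s,v)\le d_{G-e}(s,v)$). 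It remains to handle, for each tree edge $e=(a,b)$ with $b$ the endpoint farther from $s$ and each $v$ in the subtree $B_e$ of $T_s$ hanging from $b$, the failure of $e$.

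For each such $e$ I would add a constant number of \emph{reentry} edges drawn from replacement paths: compute a shortest path from $s$ into $B_e$ in $G-e$ (equivalently, a shortest replacement path to $b$, or to suitably chosen vertices of $B_e$) and add its last edge $(x_e,y_e)$, where $x_e\notin B_e$ and $y_e\in B_e$. Since $x_e\notin B_e$, the $T_s$-path from $s$ to $x_e$ avoids $e$, so $E'$ already reaches $x_e$ at cost $d_G(s,x_e)$. This adds $O(1)$ edges per tree edge, hence $O(n)$ in total, and all the ingredients (BFS in $G$ and in each $G-e$) are polynomial-time.

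For the distortion I would fix $e=(a,b)\in E(T_s)$ and $v\in B_e$, set $P=\pi_{G-e}(s,v)$, and decompose $P$ at its last crossing from outside to inside $B_e$, say the edge $(x,y)$ with $x\notin B_e$, $y\in B_e$: then $P[s,x]$ is a fault-free $s$--$x$ path and $P[y,v]\subseteq B_e$. The prefix is paid for by $E(T_s)\subseteq E'$ at cost $d_G(s,x)\le |P[s,x]|$; the crossing is paid for by the added edge $(x_e,y_e)$ (or, when $(x,y)$ differs from it, by rerouting through $y_e$ and arguing that $d_G(s,x_e)+1$ does not exceed the portion of $P$ it replaces); and the suffix $P[y,v]$ — the crucial part — stays inside $B_e$ and therefore never uses $e$, so it should be recovered through $A$ (which spans $G$, so $d_A(y,v)\le\alpha\, d_G(y,v)+\beta$) together with the tree edges of $B_e$, the slack being absorbed into the $(\alpha,\beta)$ budget. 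Summing the three contributions, using $d_G(\cdot,\cdot)\le d_{G-e}(\cdot,\cdot)$ along $P$, should yield $d_{(A\cup E')-e}(s,v)\le\alpha\,d_{G-e}(s,v)+\beta$.

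The step I expect to be the main obstacle is exactly the suffix $P[y,v]$: even though it avoids $e$, the $A$-path realizing $d_A(y,v)$ need not avoid $e$, and $d_G(y,v)$ may be far smaller than the length of $P[y,v]$ because of a shortcut that leaves $B_e$ through the failed edge — so one cannot simply splice in $A$. This is precisely where the spanner's stretch allowance matters (for an \emph{exact} fault-tolerant BFS structure one provably needs $\Theta(n^{3/2})$ edges per source), and I would handle it by recursing the reentry-edge construction inside $B_e$ relative to its sub-subtrees; the recursion terminates because the relevant $s$-distance strictly decreases, and the total stays $O(1)$ per tree edge by charging each added edge to its tree edge. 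Finally, for the VFT version a vertex failure $x$ amounts to failing all edges incident to $x$, i.e.\ to deleting the whole subtree below $x$ when $x\in V(T_s)$; the same scheme applies, but keeping track of which reentry edges are needed across the many cut subtrees created by a single vertex failure costs an extra $O(\log n)$ factor through a standard recursive halving of the affected tree path, giving the $O(|S'|\,n\log n)$ bound.
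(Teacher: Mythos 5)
First, a point of comparison: this paper does not actually prove Lemma~\ref{lem:GBLP14} --- it is imported from \cite{BGLP14}, and the only ingredient added here is the footnoted remark that the single-source statement extends to multiple sources by applying it once per source and taking the union. Your opening paragraph reproduces exactly that reduction, and that part is fine.

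The single-source core, however, contains a genuine gap, and it is precisely the one you flag yourself. After decomposing $P=\pi_{G-e}(s,v)$ at its last entry edge $(x,y)$ into the subtree $B_e$, the prefix is indeed paid for by the BFS tree, but neither the crossing nor the suffix is actually handled. A single stored reentry edge $(x_e,y_e)$ per tree edge cannot in general serve all targets $v\in B_e$: the entry vertex $y$ of $v$'s replacement path may lie far (inside $B_e$) from $y_e$, so ``rerouting through $y_e$'' just re-creates the suffix problem with $y_e$ in place of $y$. For the suffix itself you correctly observe that the $A$-path realizing $d_A(y,v)\le \alpha\, d_G(y,v)+\beta$ may traverse the failed edge $e$, and your proposed remedy --- recursing the reentry-edge construction inside $B_e$ with the total ``staying $O(1)$ per tree edge by charging'' --- is exactly the step that requires a proof and is not obviously true: the recursion naturally produces edges per (failed ancestor edge, sub-subtree) pair, which is quadratic unless one shows explicitly how the $(\alpha,\beta)$ slack is consumed to prune it, and no such mechanism is exhibited; termination of the recursion does not bound the number of added edges. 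Note also that for exact stretch the statement with $O(n)$ extra edges is provably false (fault-tolerant BFS needs $\Theta(n^{3/2})$ edges per source, as you yourself recall), so any correct proof must use the all-pairs spanner $A$ in an essential, quantitative way; in your final construction $A$ plays almost no role, which is a strong indication the sketch cannot be completed as stated. The same objection applies, a fortiori, to the VFT claim, where the $O(\log n)$ factor is attributed to an unspecified ``recursive halving''. In short: the multi-source reduction and the prefix accounting are correct, but the heart of the lemma --- converting the $(\alpha,\beta)$ stretch allowance into $O(n)$ reentry edges while guaranteeing fault-free detours inside the detached subtree --- is assumed rather than proved, and for that one must defer to the actual argument in \cite{BGLP14}.
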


Combining the above result with the $2$-additive spanner of size $O(n^{3/2})$ in~\cite{ACIM99}, we obtain $2$-additive EFT and VFT sourcewise spanners of size $\gamma(n, |S^\prime|) = O(n \sqrt{n} + |S^\prime| \cdot n)$ and $\gamma(n, |S^\prime|) = O(n \sqrt{n} + |S^\prime| \cdot n \log n)$, respectively. By using these spanners in Algorithm~\ref{alg:b+2_ft_spanner}, with $p=\sqrt{n}$ and $p=\sqrt{n \log n}$ respectively, we obtain the following result. 

\begin{corollary}
There exists a polynomial time algorithm to compute a $4$-additive EFT spanner of size $O(n^{3/2})$, and a $4$-additive VFT spanner of size $O(n^{3/2}\sqrt{\log n})$.
\end{corollary}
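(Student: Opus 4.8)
The plan is to derive both spanners as instantiations of Algorithm~\ref{alg:b+2_ft_spanner} with $f=1$ and $\beta=2$, fed with the $2$-additive EFT (resp.\ VFT) sourcewise spanner obtained just above by combining Lemma~\ref{lem:GBLP14} with the $2$-additive spanner of~\cite{ACIM99}. By Theorem~\ref{thr:beta_2_spanner}, the graph $H$ returned by the algorithm is then a $(\beta+2)=4$-additive $1$-EFT (resp.\ VFT) spanner, so the whole content of the corollary reduces to checking the size bound and the polynomial running time.

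For the size, I would start from the estimates $\gamma(n,\ell)=O(n\sqrt n+\ell n)$ for the EFT sourcewise spanner and $\gamma(n,\ell)=O(n\sqrt n+\ell n\log n)$ for the VFT one: the $2$-additive spanner of~\cite{ACIM99} contributes $O(n^{3/2})$ edges, and Lemma~\ref{lem:GBLP14} adds $O(\ell n)$ (resp.\ $O(\ell n\log n)$) further edges to make it sourcewise with respect to $\ell$ sources. Substituting this into Lemma~\ref{lem:sourcewise:size} with $f=1$, the output has size
\[
O\!\left(np+n+n^{3/2}+\frac{n}{p}\cdot n\right)=O\!\left(np+n^{3/2}+\frac{n^2}{p}\right)
\]
in the EFT case, and the same with the last term multiplied by $\log n$ in the VFT case. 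I would then balance the two $p$-dependent terms: taking $p=\sqrt n$ gives $np=n^2/p=n^{3/2}$, hence total size $O(n^{3/2})$ for EFT, while taking $p=\sqrt{n\log n}$ gives $np=n^2\log n/p=n^{3/2}\sqrt{\log n}$, hence total size $O(n^{3/2}\sqrt{\log n})$ for VFT.

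Finally, I would observe that every ingredient runs in polynomial time: the $2$-additive spanner of~\cite{ACIM99} does, the augmentation of Lemma~\ref{lem:GBLP14} is stated to, and Algorithm~\ref{alg:b+2_ft_spanner} executes its \texttt{while} loop at most $n$ times, each iteration being implementable in polynomial time (find a vertex $s$ with $\wdeg(s)\ge p$, recolor it \red, and update the counters and colors of its white neighbors). The argument is essentially a bookkeeping composition of already-established statements; the only place requiring a little attention is the optimization of $p$, and in particular keeping track of the single extra $\log n$ factor in the VFT branch so that it surfaces as $\sqrt{\log n}$ (and not $\log n$) in the final size.
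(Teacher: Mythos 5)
Your proposal is correct and follows exactly the paper's route: plug the $2$-additive EFT/VFT sourcewise spanners (from Lemma~\ref{lem:GBLP14} combined with the spanner of~\cite{ACIM99}, giving $\gamma(n,\ell)=O(n\sqrt n+\ell n)$ and $O(n\sqrt n+\ell n\log n)$) into Algorithm~\ref{alg:b+2_ft_spanner} with $p=\sqrt n$ and $p=\sqrt{n\log n}$, and invoke Theorem~\ref{thr:beta_2_spanner} and Lemma~\ref{lem:sourcewise:size} for the stretch and size. Your balancing of $p$, including the $\sqrt{\log n}$ factor in the VFT case, matches the paper's choices.
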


\section{Augmenting Clustering-Based Additive Spanners}
\label{sec:augmenting_clustering_spanners}

\newcommand{\C}{\mathcal{C}}
\newcommand{\GC}{G_\C}
\newcommand{\ccenter}{\mbox{cnt}}
\newcommand{\tpi}{\tilde\pi}

Most additive spanners in the literature are based on a \emph{clustering} technique. A subset of the vertices of the graph $G$ is partitioned into  \emph{clusters}, each containing a special \emph{center} vertex along with some of its neighbors. The distances between these clusters is then reduced by adding a suitable set of edges to the spanner. This technique is used, for example, in \cite{BKMP10,C13}.
We now describe a general technique which can be used to augment such spanners in order to obtain a fault-tolerant additive spanner.

More formally, a \emph{clustering} of $G$ is a partition $\C$ of a subset of $V(G)$. We call each element $C \in \C$ a \emph{cluster}. We say that a vertex $v$ is \emph{clustered} if it belongs to a cluster, and \emph{unclustered} otherwise.
Each cluster $C \in \C$ is associated with a vertex $u \in C$ which is the \emph{center} of $C$.
For each \emph{clustered} vertex $v$, we denote by $\ccenter(v)$ the center of the cluster containing $v$.

We say that a $\beta$-additive spanner $A$ is \emph{clustering-based} if there exists a clustering $\C$ of $G$ such that: (i) $A$ contains all the edges incident to unclustered vertices, (ii) $A$ contains all the edges between every clustered vertex $v$ and $\ccenter(v)$, and (iii) the following property holds:

\begin{property}
	\label{property:1_6_spanner_sp_structure}
	For every $u,v \in V(G)$ such that $v$ is a clustered vertex, there exists a path $\tpi(u,v)$ in $A$ such that one of the following conditions holds: 
	\begin{enumerate}[label=(P\arabic*)]
		\item $|\tpi(u,v)| \le d_G(u,v) + \beta - 2$; \label{item:1_6_spanner_sp_structure_rule_1}
		\item \label{item:1_6_spanner_sp_structure_rule_2} $|\tpi(u,v)| = d_G(u,v) + \beta -1$ and either (i) $v=\ccenter(v)$, or (ii) the last edge of $\tpi(u,v)$ is $(\ccenter(v),v)$.
		\item \label{item:1_6_spanner_sp_structure_rule_3} $|\tpi(u,v)| = d_G(u,v) + \beta$, $v \neq \ccenter(v)$, and the last edge of $\tpi(u,v)$ is $(\ccenter(v),v)$.
	\end{enumerate}
\end{property}

Our algorithm works as follows (see Algorithm \ref{alg:augmenting_clusters}). We add to our spanner $H$ a $\beta$-additive clustering-based spanner $A$, and a $\alpha$-multiplicative EFT spanner $M$. Note that so far our construction is the same as in \cite{BCP12}, with the extra constraint that $A$ is clustering-based. We then augment $H$ by adding a carefully chosen subset $E'$ of inter and intra-cluster edges.

\begin{algorithm}[t]
	\DontPrintSemicolon
	$E' \gets \emptyset$; $M \gets (\alpha, 0)$ EFT spanner; $A\gets (1,\beta)$ clustering-based spanner \;
	
	\ForEach{$C \in \C$\label{ln:augmenting_intra_cluster}}{
		\ForEach{$v \in C$}{
			\If{$\exists (v,x) \in E(G) : x \in C \setminus \{ \ccenter(v) \}$}{
				$E' \gets E' \cup \{ (v,x) \}$.
			}
		}
	}

	\BlankLine
	
	\ForEach{$C,C^\prime \in \C : C \neq C^\prime$ \label{ln:augmenting_inter_cluster}}{
				\If{$\exists e,e^\prime \in \delta(C, C^\prime) : e$ and $e^\prime$ are vertex-disjoint}{
					$E' \gets E' \cup \{ e, e^\prime \}$.
				}
				\ElseIf{$\exists e,e^\prime \in \delta(C, C^\prime) : e \neq e^\prime $}{
					$E' \gets E' \cup \{ e, e^\prime \}$.				
				}
				\Else {
					$E' \gets E' \cup \delta(C, C^\prime)$ \tcc*{$\delta(C, C^\prime)$ contains at most one edge}
				}
	}

	\BlankLine	
	
	\Return $H \gets (V(G), E' \cup E(M) \cup E(A))$

	\caption{Algorithm for computing a fault-tolerant additive spanner from multiplicative and clustering-based additive spanners. Here $\{\C,\ccenter(\cdot)\}$ denotes the clustering of $G$ while $\delta(C, C^\prime)$ is the set of the edges in $E(G)$ with one endpoint in $C$ and the other in $C^\prime$.}
	\label{alg:augmenting_clusters}
\end{algorithm}

Let $\{\C,\ccenter(\cdot)\}$ be the clustering of $A$. It is easy to see that $E'$ contains at most $O(n + |\C|^2)$ edges and hence $|E(H)| = O(|E(A)| + |E(M)| + |\mathcal{C}|^2)$. We now prove an useful lemma which is then used to upper-bound the distortion of the spanner $H$.

\begin{figure}[t]
	\centering
	\includegraphics[scale=1]{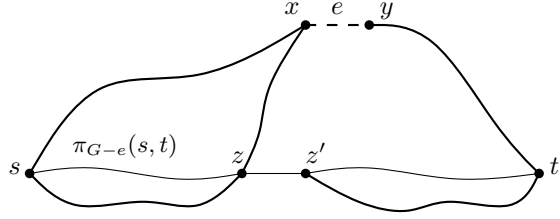}
	\caption{Decomposition of $\pi_{G-e}(s,t)$ so that all shortest paths from $s$ to $z$ (resp. from $z^\prime$ to $t$) in $A$ are fault-free. Bold lines denote shortest paths in $A$.}\label{fig:spanner_union_one_fault}
\end{figure}

\begin{lemma}
	\label{lemma:spanner_union_one_fault_structure}
	Let $A$ be a spanning subgraph of $G$, let $e \in E(G)$ be a failed edge and $s,t \in V(G)$ be two vertices satisfying $d_{A}(s,t) < d_{A-e}(s,t) \neq \infty$. There exist two consecutive vertices $z,z^\prime$ in $V(\pi_{G-e}(s,t))$, with $d_{G-e}(s,z) < d_{G-e}(s,z^\prime)$, such that every shortest path in $A$ between $s$ and $z$ (resp. $t$ and $z^\prime$) is fault-free.
\end{lemma}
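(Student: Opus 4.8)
The statement says: if $A$ is a spanning subgraph, $e$ is a failed edge, and the failure of $e$ strictly increases the $A$-distance from $s$ to $t$ (without disconnecting them), then along the \emph{fault-free} shortest path $\pi=\pi_{G-e}(s,t)$ we can find a consecutive pair $z,z'$ (with $z$ closer to $s$) such that every shortest $s$--$z$ path in $A$ and every shortest $z'$--$t$ path in $A$ avoids $e$. The plan is to isolate this pair by a discrete intermediate-value argument along $\pi$. For each vertex $x\in V(\pi)$, consider the predicate ``every shortest path in $A$ from $s$ to $x$ is fault-free''. At $x=s$ this is vacuously true (the trivial path), so the predicate holds at the first vertex of $\pi$. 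At $x=t$ it must fail: if every shortest $s$--$t$ path in $A$ were fault-free, then $d_{A-e}(s,t)\le d_A(s,t)$, contradicting the hypothesis $d_A(s,t)<d_{A-e}(s,t)$. Hence, walking along $\pi$ from $s$ to $t$, there is a first vertex $z'$ where the predicate fails; let $z$ be the vertex immediately preceding it on $\pi$. Then every shortest $s$--$z$ path in $A$ is fault-free by the choice of $z$, and $d_{G-e}(s,z)<d_{G-e}(s,z')$ since $z,z'$ are consecutive on a shortest path of $G-e$.

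\textbf{The remaining obstacle} is the symmetric half: we must also guarantee that every shortest $t$--$z'$ path in $A$ is fault-free, not just that the $s$--$z$ side is clean. This does \emph{not} follow immediately from the one-sided argument above, so the plan is to run the argument from the $t$ end simultaneously. Define analogously, for $y\in V(\pi)$, the predicate ``every shortest path in $A$ from $t$ to $y$ is fault-free'': it holds at $y=t$ and fails at $y=s$, so there is a last vertex $w$ on $\pi$ (walking from $t$ toward $s$) at which it still holds, with the vertex $w^-$ preceding it (on the way from $t$) failing. The key claim to establish is that the two ``good'' regions overlap in an edge: the set of vertices $x$ with clean shortest $s$--$x$ paths forms a prefix $\pi[s,z]$ of $\pi$, the set of vertices $y$ with clean shortest $t$--$y$ paths forms a suffix $\pi[w,t]$, and these two arcs must share at least one edge of $\pi$. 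If they share the edge $(z,z')$ with $z\in\pi[s,z]$ and $z'\in\pi[w,t]$, we are done. To see the overlap, note that if the prefix and suffix were disjoint there would be an edge $(a,b)$ on $\pi$ with $a$ in neither good region from $s$ nor — one must check — create a contradiction with $d_A(s,t)<d_{A-e}(s,t)\neq\infty$; concretely, a vertex on $\pi$ that lies strictly between the two good arcs would witness that $e$ is ``unavoidable'' from both sides, but since $\pi$ itself is fault-free the concatenation of a clean $s$--$z$ path, the fault-free subpath $\pi[z,z']$, and a clean $z'$--$t$ path gives a fault-free $s$--$t$ walk, and a careful length bookkeeping forces the arcs to meet.

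\textbf{Clean formulation to carry out.} The cleanest route avoids the two separate walks and instead argues directly on the single predicate $Q(x)\equiv$ ``every shortest $s$--$x$ path in $A$ is fault-free''. Let $z'$ be the first vertex on $\pi$ (from $s$) with $Q(z')$ false and $z$ its predecessor, as above. It then remains to show every shortest $t$--$z'$ path in $A$ is fault-free. Suppose some shortest $t$--$z'$ path $P$ in $A$ uses $e$. I would combine $P$ with a shortest $s$--$z$ path in $A$ (fault-free, since $Q(z)$ holds) and the single edge $(z,z')\in E(\pi)\subseteq E(G-e)$ to bound $d_{A-?}$ and derive that $e$ must in fact lie on \emph{every} shortest $s$--$z'$ path as well, contradicting nothing directly — so instead the correct move is to observe $d_A(s,z')\le d_A(s,z)+1=d_{A-e}(s,z)+1$ while $d_A(z',t)=d_A(s,t)-d_A(s,z')$ only if $z'$ is on a shortest $s$--$t$ path in $A$, which it need not be. I expect the genuinely delicate point to be exactly this: ensuring the chosen $z,z'$ serve \emph{both} endpoints, which is why the two-sided / overlapping-arcs packaging is the safer plan, and the length bookkeeping showing the arcs overlap (rather than just abut) is the step I would spend the most care on.
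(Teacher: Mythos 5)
Your proposal is incomplete, and the gap is exactly the one you flag yourself: you never prove that every shortest $t$--$z^\prime$ path in $A$ is fault-free. The first half (take $z^\prime$ to be the first vertex of $\pi_{G-e}(s,t)$ at which some shortest path from $s$ in $A$ uses $e$, and $z$ its predecessor) is fine and gives the clean $s$--$z$ side for free, but the attempted completions you sketch do not close the argument: the splicing attempt, as you admit, ``contradicts nothing directly,'' and the inequality $d_A(z^\prime,t)=d_A(s,t)-d_A(s,z^\prime)$ indeed fails because $z^\prime$ need not lie on any shortest $s$--$t$ path of $A$. The fallback ``overlapping arcs'' plan is also not carried out, and its structural premise is unjustified: the set of vertices $x\in V(\pi_{G-e}(s,t))$ all of whose shortest paths from $s$ in $A$ avoid $e$ need not form a prefix of the path (the predicate is not monotone along $\pi_{G-e}(s,t)$), and no ``length bookkeeping'' forcing the two good regions to share an edge is given. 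So as it stands the lemma's two-sided conclusion is asserted, not proved.

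For comparison, the paper sidesteps the symmetry problem by anchoring the choice on the $t$-side: since $d_A(s,t)<d_{A-e}(s,t)$, every shortest $s$--$t$ path in $A$ uses $e=(x,y)$, so one may take $z$ to be the \emph{last} vertex of $\pi_{G-e}(s,t)$ admitting a shortest path to $t$ in $A$ through $e$ (it exists because $s$ qualifies), and $z^\prime$ its successor; maximality then makes the $t$--$z^\prime$ side automatic, and the work goes into the $s$--$z$ side. That side is handled by a direction argument: the faulty shortest $z$--$t$ path must traverse $e$ in the same direction as $\pi_A(s,t)$ (otherwise splicing $\pi_A(s,t)[s,x]$ with its suffix from $x$ yields a fault-free shortest $s$--$t$ path in $A$, contradicting $d_A(s,t)<d_{A-e}(s,t)$), and then assuming some shortest $s$--$z$ path uses $e$ leads to $d_A(s,z)=d_A(s,z)+2$. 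Your choice of $z,z^\prime$ could in fact be made to work by running the mirror image of this directionality argument on the $t$--$z^\prime$ side, but that is precisely the step your proposal leaves open.
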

\begin{proof}
	
	First of all, notice that $e=(x,y)$ belongs to every shortest path between $s$ and $t$ in $A$, therefore let $\pi_A(s,t) = \langle s, \dots, x, y, \dots, t \rangle$.
	Consider the vertices of $\pi_{G-e}(s,t)$ from $s$ to $t$, let $z \in V(\pi)$ be the last vertex such that there exists a shortest path $\pi$ between $z$ and $t$ in $A$ that contains $e$ ($z$ can possibly coincide with $s$), and call $z^\prime$ the vertex following $z$ in $\pi$ (see Figure~\ref{fig:spanner_union_one_fault}). 	
	By the choice of $z$, we have that no shortest path between $z^\prime$ and $t$ in $A$ can contain $e$.
	Moreover, $\pi$ must traverse $e$ in the same direction as $\pi_A(s,t)$, i.e., $\pi = \langle z, \dots, x, y, \dots, t \rangle$. This is true since otherwise we would have $\pi = \langle z, \dots, y, x, \dots, t \rangle$ and hence $\pi_A(s,t)[s,x] \circ \pi[x,t]$ would be a fault-free shortest path between $s$ and $t$ in $A$, a contradiction.
	
	It remains to show that no shortest path between $s$ and $z$ in $A$ can contain $e$. Suppose this is not the case, then:
	\[
		d_A(s,z) = d_A(s,y) + d_A(y,z) = d_A(s,x) + 1 + 1 + d_A(z, x) = d_A(s, z) + 2 
	\]
	which is again a contradiction.
\end{proof}

\noindent We are now ready to prove the main theorem of this section.

	\begin{theorem}
		\label{thm:clustering_eft_spanner}
		Algorithm \ref{alg:augmenting_clusters} computes a $(2\beta + \max\{2, \alpha-3\})$-additive EFT-spanner.	
	\end{theorem}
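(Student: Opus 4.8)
The goal is to bound $d_{H-e}(s,t)$ for arbitrary $s,t\in V(G)$ and a single failed edge $e$. As usual, the only interesting case is when $s$ and $t$ are connected in $G-e$. If every shortest path from $s$ to $t$ in $A$ avoids $e$, then $d_{H-e}(s,t)\le d_A(s,t)\le d_G(s,t)+\beta\le d_{G-e}(s,t)+\beta$ and we are done (this is even better than the claimed bound). So assume $d_A(s,t)<d_{A-e}(s,t)\ne\infty$, which puts us exactly in the setting of Lemma~\ref{lemma:spanner_union_one_fault_structure}. That lemma hands us two consecutive vertices $z,z'$ on $\pi_{G-e}(s,t)$ with $d_{G-e}(s,z)<d_{G-e}(s,z')$ such that \emph{every} shortest path in $A$ from $s$ to $z$, and from $t$ to $z'$, is fault-free. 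Hence $d_{H-e}(s,z)\le d_A(s,z)$ and $d_{H-e}(z',t)\le d_A(z',t)$, and it remains only to splice these two fault-free pieces together through the edge $(z,z')$ (or a short detour substituting for it), controlling the total additive loss. Write $d_1=d_{G-e}(s,z)$ and $d_2=d_{G-e}(z',t)$, so $d_{G-e}(s,t)=d_1+1+d_2$, and note $d_G(s,z)\le d_1$, $d_G(z',t)\le d_2$.

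The heart of the argument is to handle the middle edge $(z,z')$. First, if $(z,z')\in E(H-e)$ then we immediately get
\[
 d_{H-e}(s,t)\le d_A(s,z)+1+d_A(z',t)\le (d_1+\beta)+1+(d_2+\beta)=d_{G-e}(s,t)+2\beta,
\]
which is within the bound since $2\beta\le 2\beta+\max\{2,\alpha-3\}$. The problematic sub-case is when $(z,z')\notin E(H)$; since $H\supseteq A$ contains all edges incident to unclustered vertices and all center edges, this forces $z$ and $z'$ to be clustered, non-center, and in distinct clusters $C\ne C'$ with $(z,z')\in\delta(C,C')$. Now I invoke the inter-cluster augmentation loop of Algorithm~\ref{alg:augmenting_clusters}: it adds to $E'$ (hence to $H$) either two vertex-disjoint edges of $\delta(C,C')$, or two distinct edges, or (if $|\delta(C,C')|=1$) the unique edge $(z,z')$ itself — so in the last case $(z,z')\in E(H)$, already handled. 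Otherwise $H$ contains two distinct edges $e_1,e_2\in\delta(C,C')$; since $e$ can be at most one of them, at least one edge $(a,a')\in\delta(C,C')$ with $a\in C$, $a'\in C'$ survives in $H-e$. Then I route $s\to \ccenter(z)\to a\to a'\to \ccenter(z')\to t$. The two hops $(\ccenter(z),a)$ and $(a',\ccenter(z'))$ lie in $H$: if $a=z$ the hop $\ccenter(z)\to z$ is a center edge in $A$, and if $a\ne z$ then $(\ccenter(z),a)$ with $a,\ccenter(z)\in C$, $a\ne\ccenter(z)$ is exactly the kind of edge added by the intra-cluster loop (line~\ref{ln:augmenting_intra_cluster}); likewise for $a'$. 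They also survive $e$ provided $e$ is not one of them — and I can choose which of $e_1,e_2$ to use to dodge $e$, and the two intra-cluster hops I can likewise avoid being $e$ by the same disjointness choice, with a small case analysis.

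For the distance accounting along this detour I use Property~\ref{property:1_6_spanner_sp_structure} applied to the pairs $(s,z)$ and $(t,z')$ (both $z,z'$ are clustered, so the property applies), together with the fact that the fault-free guarantee of Lemma~\ref{lemma:spanner_union_one_fault_structure} is about shortest $A$-paths, while $\tpi$ may be slightly longer. The cleanest way: bound $d_{H-e}(s,\ccenter(z))$ and $d_{H-e}(\ccenter(z'),t)$. If $\tpi(s,z)$ satisfies \ref{item:1_6_spanner_sp_structure_rule_3} or case (ii) of \ref{item:1_6_spanner_sp_structure_rule_2}, its last edge is the center edge $(\ccenter(z),z)$, so dropping it gives an $A$-path $s\to\ccenter(z)$ of length $\le d_G(s,z)+\beta-1\le d_1+\beta-1$; but I must still argue this path is fault-free — here I should apply Lemma~\ref{lemma:spanner_union_one_fault_structure} more carefully, or simply observe that if $d_A(s,z)<d_{A-e}(s,z)$ then we could have re-run the lemma's selection further along; the robust route is to let $z$ be chosen as in the lemma, note $d_{A-e}(s,z)=d_A(s,z)$, and then bound $d_{H-e}(s,\ccenter(z))\le d_{A-e}(s,z)+1\le d_1+\beta+1$ by going $s\to z\to\ccenter(z)$ along a fault-free shortest $A$-path plus the center edge. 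Symmetrically $d_{H-e}(\ccenter(z'),t)\le d_2+\beta+1$. Adding the three surviving hops $\ccenter(z)\to a\to a'\to\ccenter(z')$ (cost $\le 3$, and actually $a\to a'$ replaces the missing $(z,z')$) yields
\[
 d_{H-e}(s,t)\le (d_1+\beta+1)+3+(d_2+\beta+1)=d_{G-e}(s,t)+2\beta+4,
\]
which is not quite $2\beta+\max\{2,\alpha-3\}$ unless $\alpha\ge7$. To shave the constant down to the stated bound I will instead be more economical: when $(z,z')\notin E(H)$ I route through $a,a'$ with $a$ chosen, when possible, equal to $z$ or with $\ccenter(z)$ on a shortest path, using \ref{item:1_6_spanner_sp_structure_rule_1}/\ref{item:1_6_spanner_sp_structure_rule_2} to get the $\beta-1$ or $\beta-2$ savings that compensate the extra $\pm1$'s, and I fall back on the $\alpha$-multiplicative spanner $M$ only in the residual case where the blocks are short: if $d_{G-e}(s,t)$ is small, then $d_M(s,t)\le\alpha\cdot d_G(s,t)$ — wait, $M$ is only fault-tolerant, so $d_{M-e}(s,t)\le\alpha\cdot d_{G-e}(s,t)$, giving additive loss $(\alpha-1)d_{G-e}(s,t)$, which is at most $2\beta+\alpha-3$ exactly when $d_{G-e}(s,t)$ is bounded by a constant depending on $\beta$; a short separate argument (or the block-decomposition idea of \cite{BCP12}) covers that regime. \textbf{The main obstacle} is precisely this bookkeeping: matching the tight constant $\max\{2,\alpha-3\}$ requires splitting into the "$(z,z')$ present / same cluster / different clusters with $\ge2$ crossing edges / single crossing edge" cases and, in each, extracting exactly the right savings from the three sub-cases of Property~\ref{property:1_6_spanner_sp_structure} while tracking which of the handful of relevant edges could be the failed edge $e$; the conceptual skeleton (Lemma~\ref{lemma:spanner_union_one_fault_structure} to isolate two fault-free $A$-pieces, then bridge the gap using augmented inter/intra-cluster edges) is straightforward.
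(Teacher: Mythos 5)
Your skeleton (isolate $z,z'$ via Lemma~\ref{lemma:spanner_union_one_fault_structure}, then bridge the missing edge $(z,z')$ with the augmented cluster edges) is the same as the paper's, but the proof does not close, and the two ideas that actually produce the bound $2\beta+\max\{2,\alpha-3\}$ are missing. First, the term $\alpha-3$ does not come from applying $M$ to the whole pair $(s,t)$ in some ``short distance'' regime --- there is no such regime, since $d_{G-e}(s,t)$ can be arbitrarily large and your fallback ``$(\alpha-1)d_{G-e}(s,t)\le 2\beta+\alpha-3$'' has no basis. The correct move is to apply $M$ \emph{locally to the single edge} $(z,z')$: since $(z,z')$ lies on $\pi_{G-e}(s,t)$ it survives in $G-e$, so $d_{M-e}(z,z')\le\alpha\cdot d_{G-e}(z,z')=\alpha$. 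Setting $\gamma=d_A(s,z)-d_{G-e}(s,z)$ and $\gamma'=d_A(z',t)-d_{G-e}(z',t)$, this immediately gives stretch $\gamma+\gamma'+\alpha-1\le 2\beta+\alpha-3$ whenever $\gamma+\gamma'\le 2\beta-2$, with no cluster routing at all. Second, in the complementary case $\gamma+\gamma'\ge 2\beta-1$ (say $\gamma=\beta$) one gets $d_A(s,z)=d_{G-e}(s,z)+\beta\ge d_G(s,z)+\beta\ge|\tilde\pi(s,z)|\ge d_A(s,z)$, so $\tilde\pi(s,z)$ \emph{is} a shortest $A$-path (hence fault-free by the lemma) and attains the worst case of Property~\ref{property:1_6_spanner_sp_structure}, forcing case \ref{item:1_6_spanner_sp_structure_rule_3}: its last edge is the center edge of $z$'s cluster. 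Deleting that edge yields a fault-free path from $s$ to the center of length $d_{G-e}(s,z)+\beta-1$; this $\beta-1$ saving is exactly what pays for the $2$- or $3$-hop detour through the augmented intra/inter-cluster edges and yields $2\beta+2$. Your substitute route $s\to z\to\mathrm{cnt}(z)$ forfeits the saving (and you never rule out that the center edge is $e$), which is why your accounting stalls at $2\beta+4$.

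There are also two smaller but genuine errors. The claim that $(z,z')\notin E(H)$ forces $z,z'$ to be non-centers in \emph{distinct} clusters is false: if $z,z'$ lie in the same cluster $C$, the edge $(z,z')$ need not belong to $A$ nor be the edge chosen by the intra-cluster loop, so the case $C=C'$ must be treated (the paper does so via the edge $(z',x)$, $x\in C\setminus\{\mathrm{cnt}(z)\}$, added by line~\ref{ln:augmenting_intra_cluster}, which gives two edge-disjoint length-$2$ paths from the center to $z'$, one of which avoids $e$). And in the distinct-cluster case with two crossing edges $(u,u'),(v,v')$ you still need the sub-case analysis the paper performs --- $u'=v'$ versus $u'\ne v'$, and, in the latter, whether $e=(\mathrm{cnt}(z'),z')$ or not, using Property~\ref{property:1_6_spanner_sp_structure} on the $t$-side to reach $\mathrm{cnt}(z')$ fault-free --- plus the observation that the only edge the two candidate detours can share is $(\mathrm{cnt}(z),z)$, which cannot be $e$ because it lies on the fault-free shortest path $\tilde\pi(s,z)$. ``Dodging $e$ by a small case analysis'' is precisely the content that has to be supplied; as written, the argument neither reaches the stated constant nor verifies fault-freeness of the bridging hops.
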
		
	\begin{proof}
		Choose any two vertices $s,t \in V(G)$ and a failed edge $e \in E(G)$.
		Suppose that $s,t$ are connected in $G-e$, and that every shortest path between $s$ and $t$ in $A$ contains $e$ (as otherwise the claim trivially holds). We partition $\pi_{G-e}(s,t)$ by finding $z,z^\prime \in V(\pi_{G-e}(s,t))$ as shown by Lemma~\ref{lemma:spanner_union_one_fault_structure}.

	\begin{figure}[tb]
		\centering
		\includegraphics[width=\textwidth]{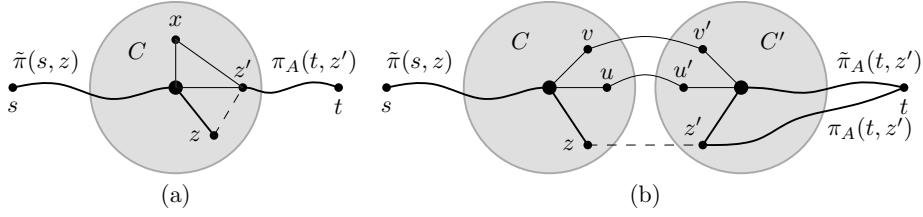}
		\caption{Cases considered in the proof of Theorem~\ref{thm:clustering_eft_spanner} to build a fault-free path between $s$ and $t$ with small additive distortion. Bold lines represent shortest paths/edges in $A$. Solid lines represent paths/edges in $H$ while the dashed edge $(z, z^\prime)$ does not belong to $E(H)$ and cannot coincide with $e$.}
		\label{fig:clusttering_eft_sanner}
	\end{figure}

		The edge $(z, z^\prime)$ is in $\pi_{G-e}(s,t)$ and hence cannot coincide with $e$. Moreover, we suppose $(z, z^\prime) \not\in E(H)$, as otherwise we would immediately have:
\begin{align*}		
d_{H-e}(s,t) & \le d_{A}(s,z) + d_{H-e}(z,z^\prime) + d_A(z^\prime, t) \\
& \le d_{G-e}(s,z) + \beta + 1 + d_{G-e}(z^\prime,t) + \beta + ( d_{G-e}(z, z^\prime) - 1 )\\
& = d_{G-e}(s,t) + 2\beta - 1. 
\end{align*}
This means that both $z$ and $z^\prime$ must be clustered. Let $C$ (resp. $C^\prime$) be the (unique) cluster that contains $z$ (resp. $z^\prime$).\footnote{Notice that $C$ and $C^\prime$ may coincide.}
		
Let $\gamma:=d_A(s,z) - d_{G-e}(s,z)$ and $\gamma^\prime:= d_A(t, z^\prime) - d_{G-e}(z^\prime, t)$. Clearly $0\leq \gamma,\gamma^\prime\leq \beta$.		
If $\gamma + \gamma^\prime \le 2\beta-2$ then we are done as:
\begin{align*}
d_{H-e}(s,t) & \le d_{A}(s,z) + d_{B-e}(z,z^\prime) + d_A(z^\prime, t)\\ & \le d_{G-e}(s, z) + d_{G-e}(z^\prime, t) + 2\beta - 2 + \alpha d_{G-e}(z,z^\prime) + (d_{G-e}(z,z^\prime) - 1)\\ 
& \le d_{G-e}(s,t) + 2\beta + \alpha - 3.
\end{align*}
		
		Next we assume that $\gamma + \gamma^\prime \ge 2\beta -1$. This means that either (i) $\gamma$ and $\gamma^\prime$ are both equal to $\beta$, or (ii) exactly one of them is $\beta$ while the other equals $\beta - 1$. Assume w.l.o.g.\ that $\gamma = \beta$. This implies that $d_A(s, z) = |\tpi(s,z)|$, hence $\tpi(s,z)$ is a shortest path between $s$ and $z$ in $A$, and by Lemma~\ref{lemma:spanner_union_one_fault_structure}, it is fault-free.
		
	In the rest of the proof we separately consider the cases $C=C^\prime$ and $C \neq C^\prime$. In the former case, since $(z, z^\prime) \not\in E(H)$, we know that, during the execution of the loop in line \ref{ln:augmenting_intra_cluster} of Algorithm~\ref{alg:augmenting_clusters}, an edge $(z^\prime,x)$ such that $x \in C \setminus \{z, \ccenter(z) \}$ has been added to $E$ (see Figure~\ref{fig:clusttering_eft_sanner} (a)).
 Since the paths $\langle \ccenter(z), z^\prime \rangle$ and $\langle \ccenter(z), x, z^\prime \rangle$ are edge-disjoint, at least one of them is fault free, hence: $d_{H-e}(\ccenter(z), z^\prime) \le 2 = d_{G-e}(z, z^\prime) + 1$. Thus, by \ref{item:1_6_spanner_sp_structure_rule_3} of Property~\ref{property:1_6_spanner_sp_structure}:
\begin{align*} 
 d_{H-e}(s, t) & \le d_A(s, \ccenter(z)) + d_{H-e}(\ccenter(z), z^\prime) + d_A(z^\prime, t)\\
 & \le d_{G-e}(s, z) + \beta - 1 + d_{G-e}(z, z^\prime) + 1 + d_{G-e}(z^\prime, t) + \beta  \\
 & \le d_{G-e}(s,t) + 2\beta.
 \end{align*}
	
	We now consider the remaining case, namely $C \neq C^\prime$. We have that $(z, z^\prime) \not\in E(H)$, therefore during the execution of the loop in line \ref{ln:augmenting_inter_cluster} of Algoritm~\ref{alg:augmenting_clusters}, two distinct edges $(u,u^\prime), (v, v^\prime)$ so that $u,v \in C$ and $u^\prime, v^\prime \in C^\prime$ must have been added to $E$ (see Figure~\ref{fig:clusttering_eft_sanner} (b)).

	Notice that $u^\prime$ and $v^\prime$ might coincide, but this would imply that $u \neq v$ and hence $u^\prime = v^\prime = z^\prime$. This, in turn, implies the existence of two edge-disjoint paths of length $2$ between $\ccenter(z)$ and $z^\prime$ in $H$, namely $\langle \ccenter(z), u, z^\prime \rangle$ and $\langle \ccenter(z), v, z^\prime \rangle$. As at least one of them must be fault-free. Therefore: 
\begin{align*}
d_{H-e}(s,t) &  \le d_{A}(s, \ccenter(z)) + d_{H-e}(\ccenter(z), z^\prime) + d_{A}(z^\prime,t) \\
 & \le d_{G-e}(s, z) + \beta - 1 + d_{G-e}(z, z^\prime) + 1 + d_{G-e}(z^\prime,t) + \beta \\
 & = d_{G-e}(s,t) + 2\beta.
\end{align*}

	On the other hand, if $u^\prime \neq v^\prime$, we consider the two paths $\pi^\prime = \langle \ccenter(z), u, u^\prime, \ccenter(z^\prime) \rangle$ and $\pi^{\prime\prime} = \langle \ccenter(z), v, v^\prime, \ccenter(z^\prime) \rangle$.\footnote{Some consecutive vertices of $\pi^\prime$ (resp. $\pi^{\prime\prime}$) might actually coincide. In this case, we ignore all but the first of such vertices and define $\pi^\prime$ (resp. $\pi^{\prime\prime}$) accordingly.}
	Notice that $\pi^\prime$ and $\pi^{\prime\prime}$ can share at most a single edge, namely $(\ccenter(z), z)$ (when $u=v=z$), and that this edge cannot coincide with $e$ as it belongs to $\tpi(s,z)$ which is a fault-free shortest path between $s$ and $z$ in $A$. This implies that at least one of $\pi^\prime$ and $\pi^{\prime\prime}$ is fault-free and hence $d_{H-e}(\ccenter(z), \ccenter(z^\prime)) \le 3 = d_{G-e}(z, z^\prime) + 2$.
	If $e = (\ccenter(z^\prime), z^\prime)$ then, since $|\tpi(t,z^\prime)| \ge d_A(z^\prime, t) \ge d_{G-e}(z^\prime, t)+ \beta - 1$, either \ref{item:1_6_spanner_sp_structure_rule_2} or \ref{item:1_6_spanner_sp_structure_rule_3} of Property~\ref{property:1_6_spanner_sp_structure} must hold, so we know that $\tpi(t,z^\prime)[t,\ccenter(z^\prime)]$ is fault-free and has a length of at most $d_{G-e}(z^\prime, t) + \beta - 1$. We have:
	\begin{align*}	
	d_{H-e}(s, t)  & \le  d_{A}(s, \ccenter(z)) + d_{H-e}(\ccenter(z), \ccenter(z^\prime)) + d_A(\ccenter(z^\prime),t)\\
	 & \le d_{G-e}(s, z) + \beta - 1 + d_{G-e}(z, z^\prime) + 2 + d_{G-e}(z^\prime, t) + \beta -1 \\
	 & \le d_{G-e}(s,t) + 2\beta.
\end{align*}	
Finally, when $e \neq (\ccenter(z^\prime), z^\prime)$, we have:
\begin{align*}
	d_{H-e}(s,t) & \le d_A(s,\ccenter(z)) + d_{H-e}(\ccenter(z), \ccenter(z^\prime)) + d_{H-e}(\ccenter(z^\prime), z^\prime) + d_{A}(z^\prime, t) \\
	 & \le d_{G-e}(s, z) + \beta -1 + d_{G-e}(z, z^\prime) + 2 + 1 + d_{G-e}(z^\prime, t) + \beta \\
	 &  \le d_G(s,t) + 2\beta + 2.
\end{align*}
This concludes the proof.
\end{proof}

This result can immediately be applied to the $6$-additive spanner of size $O(n^\frac{4}{3})$ in \cite{BKMP10}, which is clustering-based and uses $O(n^{\frac{2}{3}})$ clusters. 
Using the $5$-multiplicative EFT spanner $M$ of size $O(n^{4/3})$ from \cite{CLPR09}, we obtain:
\begin{corollary}
There exists a polynomial time algorithm to compute a $14$-additive EFT spanner of size $O(n^{\frac{4}{3}})$.
\end{corollary}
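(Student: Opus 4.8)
The plan is to instantiate Theorem~\ref{thm:clustering_eft_spanner} with the two concrete building blocks named in the statement, so that the corollary becomes a matter of plugging in constants. First I would take the $\beta$-additive clustering-based spanner $A$ to be the $6$-additive spanner of \cite{BKMP10}, which fixes $\beta = 6$; this spanner partitions a subset of $V(G)$ into $O(n^{2/3})$ clusters, retains every edge incident to an unclustered vertex, retains the center edge of every clustered vertex, and has size $O(n^{4/3})$. Then I would take the $\alpha$-multiplicative EFT spanner $M$ to be the $(2k-1)$-multiplicative $f$-EFT spanner of \cite{CLPR09} with $f = 1$ and $k = 3$, which fixes $\alpha = 5$ and gives $|E(M)| = O(n^{4/3})$; note that $k = 3$ is the smallest choice keeping $|E(M)| = O(n^{4/3})$, hence it minimizes $\alpha$. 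Running Algorithm~\ref{alg:augmenting_clusters} on these inputs and invoking Theorem~\ref{thm:clustering_eft_spanner} then produces an EFT spanner with additive stretch $2\beta + \max\{2, \alpha - 3\} = 12 + \max\{2, 2\} = 14$.

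For the size I would use the bound $|E(H)| = O(|E(A)| + |E(M)| + |\C|^2)$ noted immediately after Algorithm~\ref{alg:augmenting_clusters}. Here $|E(A)| = O(n^{4/3})$, $|E(M)| = O(n^{4/3})$, and $|\C| = O(n^{2/3})$ so that $|\C|^2 = O(n^{4/3})$; all three terms are $O(n^{4/3})$, giving $|E(H)| = O(n^{4/3})$. Polynomial running time is inherited term by term: both the $6$-additive spanner of \cite{BKMP10} and the multiplicative EFT spanner of \cite{CLPR09} are computable in polynomial time, while the two augmentation loops of Algorithm~\ref{alg:augmenting_clusters} iterate over $O(n)$ vertices and over $O(|\C|^2) = O(n^{4/3})$ ordered cluster pairs, performing only polynomial-time work at each step.

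The step I expect to demand the most care is checking that the spanner of \cite{BKMP10} genuinely meets the \emph{clustering-based} template of this section --- concretely, that its shortest paths can be chosen so as to satisfy Property~\ref{property:1_6_spanner_sp_structure} with $\beta = 6$. Conditions (i) and (ii) are explicit features of that construction, but rules \ref{item:1_6_spanner_sp_structure_rule_1}--\ref{item:1_6_spanner_sp_structure_rule_3} require a structural statement about the last hop into a clustered endpoint: one must verify that whenever the $+6$ guarantee is (nearly) tight for a pair $(u,v)$ with $v$ clustered, the witnessing path $\tpi(u,v)$ in $A$ enters $v$ via the center edge $(\ccenter(v), v)$, with the exact excess ($d_G(u,v)+\beta-2$, $+\beta-1$, or $+\beta$) dictating which of the three rules applies. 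I would obtain this by revisiting the case analysis behind the $+6$ bound in \cite{BKMP10}, separating the subcase where $v$ is reached inside its own cluster from the subcase where it is reached through a cluster-to-cluster connecting edge, and reading off the appropriate rule in each. Once this verification is in place, the corollary follows immediately from Theorem~\ref{thm:clustering_eft_spanner}.
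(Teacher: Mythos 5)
Your proposal is correct and follows essentially the same route as the paper: instantiate Theorem~\ref{thm:clustering_eft_spanner} with the $6$-additive clustering-based spanner of \cite{BKMP10} ($\beta=6$, $O(n^{2/3})$ clusters, size $O(n^{4/3})$) and the $5$-multiplicative EFT spanner of \cite{CLPR09} ($\alpha=5$, size $O(n^{4/3})$), yielding stretch $2\beta+\max\{2,\alpha-3\}=14$ and size $O(n^{4/3})$ via the bound $O(|E(A)|+|E(M)|+|\mathcal{C}|^2)$. The only difference is that the paper simply asserts that the spanner of \cite{BKMP10} is clustering-based, whereas you rightly flag that verifying Property~\ref{property:1_6_spanner_sp_structure} against that construction is the step requiring care.
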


We can similarly exploit the clustering-based spanner of \cite{C13} which provides, w.h.p., an additive stretch of $4$ and a size of $\softO(n^\frac{7}{5})$ by using $O(n^\frac{3}{5})$ clusters.

\begin{corollary}
	There exists a polynomial time randomized algorithm that computes w.h.p. a $10$-additive EFT spanner of $G$ of size $\softO(n^{\frac{7}{5}})$. 
\end{corollary}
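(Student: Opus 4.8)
The plan is to obtain this corollary as a direct instantiation of Algorithm~\ref{alg:augmenting_clusters} and Theorem~\ref{thm:clustering_eft_spanner}, taking the clustering-based spanner of Chechik~\cite{C13} as the input $(1,\beta)$ spanner $A$ (with $\beta=4$), and a sparse low-stretch multiplicative EFT spanner as the input $M$.

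First I would pin down that the $\softO(n^{7/5})$-size spanner of~\cite{C13} is indeed \emph{clustering-based} in the sense of Section~\ref{sec:augmenting_clustering_spanners}: it comes with a clustering $\C$ consisting of $O(n^{3/5})$ clusters, it keeps all edges incident to unclustered vertices (condition (i)) and all edges $(v,\ccenter(v))$ for clustered $v$ (condition (ii)), and it satisfies Property~\ref{property:1_6_spanner_sp_structure} with $\beta=4$. Conditions (i)--(ii) are read off directly from the construction; the real content is Property~\ref{property:1_6_spanner_sp_structure}, i.e., that for every $u$ and every clustered $v$ there is a path $\tpi(u,v)$ in $A$ of length at most $d_G(u,v)+2$ (case \ref{item:1_6_spanner_sp_structure_rule_1}), or of length exactly $d_G(u,v)+3$ that is either a center-to-center path or ends with the edge $(\ccenter(v),v)$ (case \ref{item:1_6_spanner_sp_structure_rule_2}), or of length exactly $d_G(u,v)+4$ ending with $(\ccenter(v),v)$ and with $v\neq\ccenter(v)$ (case \ref{item:1_6_spanner_sp_structure_rule_3}). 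This is obtained by revisiting the stretch analysis in~\cite{C13} and observing that whenever the additive error on a pair $(u,v)$ with $v$ clustered exceeds $+2$, the witnessing path can always be routed into $v$'s cluster through $\ccenter(v)$, which is exactly what (P2)--(P3) encode.

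Next I would fix $M$ to be a $5$-multiplicative EFT spanner of size $O(n^{4/3})$, obtained from~\cite{CLPR09} with $f=1$ and $k=3$, so that $\alpha=5$. Running Algorithm~\ref{alg:augmenting_clusters} on this $A$ and $M$ and invoking Theorem~\ref{thm:clustering_eft_spanner} yields an EFT spanner with additive stretch $2\beta+\max\{2,\alpha-3\}=8+\max\{2,2\}=10$. For the size, recall $|E(H)|=O(|E(A)|+|E(M)|+|\C|^2)$; here $|E(A)|=\softO(n^{7/5})$, $|E(M)|=O(n^{4/3})$ which is $O(n^{7/5})$, and $|\C|^2=O(n^{6/5})$ which is also $O(n^{7/5})$, so $|E(H)|=\softO(n^{7/5})$. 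Since the construction of~\cite{C13}, the construction of~\cite{CLPR09}, and Algorithm~\ref{alg:augmenting_clusters} all run in (randomized) polynomial time, and since the size/stretch guarantees of $A$ hold w.h.p., the overall procedure is a polynomial-time randomized algorithm producing, w.h.p., a $10$-additive EFT spanner of size $\softO(n^{7/5})$.

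I expect the main obstacle to be precisely the verification that the spanner of~\cite{C13} fits the clustering-based framework, in particular Property~\ref{property:1_6_spanner_sp_structure}: unlike the 6-additive spanner of~\cite{BKMP10}, Chechik's construction is considerably more involved, so one must reopen its case analysis and check that the ``bad'' cases achieving additive error $+3$ or $+4$ always enter the cluster of the (clustered) endpoint through its center. Everything else—the choice $k=3$, the arithmetic $2\beta+\max\{2,\alpha-3\}=10$, and the size bookkeeping—is routine once Theorem~\ref{thm:clustering_eft_spanner} is in place.
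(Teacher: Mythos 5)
Your proposal is correct and follows essentially the same route as the paper: instantiate Theorem~\ref{thm:clustering_eft_spanner} with Chechik's $4$-additive spanner (clustering-based, $O(n^{3/5})$ clusters) as $A$ and the $5$-multiplicative EFT spanner of size $O(n^{4/3})$ from \cite{CLPR09} as $M$, giving stretch $2\cdot 4+\max\{2,5-3\}=10$ and size $\softO(n^{7/5})+O(n^{4/3})+O(n^{6/5})=\softO(n^{7/5})$. The paper, like you, simply asserts that the construction of \cite{C13} fits the clustering-based framework with Property~\ref{property:1_6_spanner_sp_structure}, so your flagging of that verification as the only non-routine step is consistent with what the paper does.
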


\bibliographystyle{plain}
\bibliography{bibliography}

\newpage

\appendix

\section{Purely-Additive Fault Tolerant Spanners for $f$ Faults}
\label{sec:spanner_union_multiple_faults}

\newcommand{\class}{C}

As mentioned earlier, the authors of \cite{BCP12} construct an additive $f$-EFT spanner $H$ by merging a $\beta$-additive spanner $A$ with an $\alpha$-multiplicative $f$-EFT spanner $M$. They show that the additive distortion of their spanner is (at most) 
$2f(2\beta + \alpha - 1) + \beta$. We show that indeed their spanner has a lower distortion $2f(\beta + \alpha - 1) + \beta$.

Choose any two (distinct) vertices $s,t \in V(G)$ \ste{which are connected in $G-F$}, and consider the path $\pi=\pi_{G-F}(s,t)$.
Given two distinct vertices $u,v \in V(\pi)$ we say that either $u < v$, $u = v$, or $u > v$ if $d_A(s,u) < d_A(s,v)$, $d_A(s,u) = d_A(s,v)$, or $d_A(s,u) > d_A(s,v)$, respectively.
Moreover, for each pair of vertices $u,v \in V(G)$ we define their \emph{class} $\class(u,v)$ to be the first failed edge $(x,y)$ that is encountered by traversing $\pi_A(u,v)$ from $u$ to $v$. We consider $\class(u,v)$ as a \emph{directed} edge so that if $\class(u,v)=(x,y)$ then $d_A(u,x)<d_A(u, y)$ and $\class(u,v) \neq (y,x)$.
If the path $\pi_A(u,v)$ is fault-free then we let $\class(u,v) = \Phi$.

Now, following \cite{BCP12}, we partition $\pi$ into a sequence $\langle B_1, B_2, \dots, B_k, B^* \rangle$ of consecutive edge-disjoint subpaths of $\pi$ which we call \emph{blocks}. Each block $B \neq B^*$ is associated to five special vertices, namely $x^B_1$, $x_B^2$, $y^B_1$, $y^B_2$ and $y^B_\bot$. In particular $B$ starts with $x^B_1$ and ends with $y^B_\bot$. When $B$ is clear from the context we might omit the superscript.

The first block $B=B_0$ of $\pi_A(s,t)$ is defined as follows: suppose that $\class(s,t) \neq \Phi$ and let $x_1=s$ and $x_2$ be the first (i.e., the \emph{smallest}) vertex of $\pi$ such that $\class(x_1,x_2) = f \neq \Phi$. Then, let $(y_1, y_2)$ be the last \fab{pair} of vertices of $\pi$ (in lexicographic order) such that $\class(y_1, y_2)=f$ (notice that $(y_1,y_2)$ might coincide with $(x_1,x_2)$). We define $y_\bot$ to be the vertex following $y_1$ in $\pi$. The block $B$ is the subpath of $\pi$ starting in $x_1$ and ending in $y_\bot$. Notice that (i) $x_1 < x_2$, (ii) $x_1 \le y_1 < y_\bot$, and (iii) $V(B)$ might not contain $x_2$ and/or $y_2$.
The next block $B_1$ is defined in a similar manner, on the remaining portion of the path $\pi$, namely $\pi[y^B_\bot, t]$. We continue this recursive procedure until a block $B_k$ such that $\class(y^{B_k}_\bot, t) = \Phi$ is found, when this happens we set $x^{B^*}_1=y^{B_k}_\bot$, $y^{B^*}_\bot = t$, and $B^* = \pi[y^{B_k}_\bot, t]$, hence completing the partition. Notice that $B^*$ might be the only block of the decomposition if $\class(s,t) = \Phi$, and that $B^*$ might also be a path of length $0$ (which contains the sole vertex $t$). Moreover, notice that if $B$ and $B^\prime$ are two consecutive blocks, then $y^B_\bot = x^{B^\prime}_1$.

For the sake of convenience, we extend the definition of $\class$ 
to blocks in the following way: $\class(B) = \class(x^B_1, y^B_\bot)$. 
Moreover, given two blocks $B, B^\prime$, we say that $B^\prime > B$ if $x^{B^\prime}_1 > x^B_1$.

It is shown in \cite{BCP12} that, for each block, there exists a bypass of small additive distortion between its endpoints. 
\begin{lemma}[\cite{BCP12}]
	\label{lemma:single_block_shortcut}
	For each block $B$, we have $d_{H-F}(x^B_1, y^B_\bot) \le d_{G-F}(x^B_1, y^B_\bot) + 2\beta + \alpha - 1$.
\end{lemma}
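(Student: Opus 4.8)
First I would fix, once and for all, a prefix-closed family of shortest paths of $A$, so that every subpath of a chosen path $\pi_A(u,v)$ is again the chosen shortest path between its endpoints; this is what makes the ``first failed edge'' bookkeeping behind $\class(\cdot,\cdot)$ consistent. Write $\pi=\pi_{G-F}(s,t)$ and abbreviate $x_1:=x^B_1$, $y_\bot:=y^B_\bot$, so that $B=\pi[x_1,y_\bot]$. If $\pi_A(x_1,y_\bot)$ is fault-free — in particular if $B=B^*$, since then $\class(B^*)=\Phi$ — then $d_{H-F}(x_1,y_\bot)\le d_A(x_1,y_\bot)\le d_G(x_1,y_\bot)+\beta\le d_{G-F}(x_1,y_\bot)+\beta$, which already suffices (recall $\alpha\ge 1$ and $\beta\ge 0$). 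So from now on I assume $\class(B)=:f\ne\Phi$.

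The plan is the $f$-fault analogue of Lemma~\ref{lemma:spanner_union_one_fault_structure}: travel as deep into $B$ as possible along fault-free shortest paths of $A$ from its two endpoints, and cross the single remaining ``gap'' edge using the multiplicative spanner $M$. Concretely, I would pin down two consecutive vertices $z,z'$ of $V(B)$ — so that $(z,z')$ is an edge of $\pi$, hence of $G-F$, and $d_{G-F}(z,z')=1$ — such that both $\pi_A(x_1,z)$ and $\pi_A(z',y_\bot)$ are fault-free. For the left side, take $z$ to be the last vertex of $V(B)$, in the order along $\pi$, for which $\pi_A(x_1,z)$ is fault-free: it exists because $\pi_A(x_1,x_1)$ is, and $z\ne y_\bot$ by the standing assumption, so its successor $z'$ along $\pi$ still lies in $V(B)$.

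The heart of the matter is to verify that this $z'$ indeed satisfies $\pi_A(z',y_\bot)$ fault-free — equivalently, that the fault-free reach of $A$ from $x_1$ and that from $y_\bot$ overlap along $B$ leaving no gap longer than one edge. This is exactly where the five-vertex block definition of \cite{BCP12} is used: that $x_2$ is the smallest vertex of $\pi$ with $\class(x_1,x_2)=f$, that $(y_1,y_2)$ is the lexicographically last pair of class $f$, that $y_\bot$ is the successor of $y_1$ on $\pi$, and that $B$ terminates exactly at $y_\bot$. If some $\pi_A(z',y_\bot)$ crossed a failed edge, then splicing a fault-free prefix of a path $\pi_A(x_1,\cdot)$ with a suitable orientation of $\pi_A(z',y_\bot)$ would either contradict a distance identity in $A$ — as in the closing contradiction in the proof of Lemma~\ref{lemma:spanner_union_one_fault_structure} — or produce a class-$f$ pair lexicographically beyond $(y_1,y_2)$, contradicting the maximality built into the definition of $B$. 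I expect this step, together with keeping track of the several coincidences that may occur among $x_1,x_2,y_1,y_2,y_\bot,z,z'$, to be the main obstacle; everything else is routine.

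Granting such $z,z'$, the conclusion is a one-line estimate. Since $\pi_A(x_1,z)$ and $\pi_A(z',y_\bot)$ lie in $A-F$ we have $d_{A-F}(x_1,z)\le d_A(x_1,z)\le d_G(x_1,z)+\beta\le d_{G-F}(x_1,z)+\beta$ and, likewise, $d_{A-F}(z',y_\bot)\le d_{G-F}(z',y_\bot)+\beta$; and since $M$ is an $\alpha$-multiplicative $f$-EFT spanner, $d_{M-F}(z,z')\le\alpha\, d_{G-F}(z,z')=\alpha$. Hence, using $A-F,M-F\subseteq H-F$,
\begin{align*}
d_{H-F}(x_1,y_\bot) &\le d_{A-F}(x_1,z)+d_{M-F}(z,z')+d_{A-F}(z',y_\bot)\\
&\le \big(d_{G-F}(x_1,z)+\beta\big)+\alpha+\big(d_{G-F}(z',y_\bot)+\beta\big)\\
&= \big(d_{G-F}(x_1,z)+1+d_{G-F}(z',y_\bot)\big)+2\beta+\alpha-1\\
&= d_{G-F}(x_1,y_\bot)+2\beta+\alpha-1,
\end{align*}
where the last equality holds because $z$ and $z'$ are consecutive on $\pi=\pi_{G-F}(s,t)$ and lie between $x_1$ and $y_\bot$, so $\pi[x_1,z]\circ\langle z,z'\rangle\circ\pi[z',y_\bot]$ is a shortest path in $G-F$ from $x_1$ to $y_\bot$.
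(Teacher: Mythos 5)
Your closing computation is fine \emph{once} the pair $z,z'$ exists, but the existence of two consecutive vertices of $B$ with $\pi_A(x^B_1,z)$ and $\pi_A(z',y^B_\bot)$ both fault-free is precisely where all the difficulty of the lemma sits, and you leave it unproven (you say yourself you expect it to be the main obstacle). I do not think it can be established in the form you state it. The block is defined solely in terms of the \emph{single} failed edge $e$ that is the common class of $(x^B_1,x^B_2)$ and $(y^B_1,y^B_2)$: the choice of $x^B_2$ controls which paths $\pi_A(x^B_1,\cdot)$ are fault-free, and the lexicographic maximality of $(y^B_1,y^B_2)$ only guarantees that no later pair has class $e$ (cf.\ Lemma~\ref{lemma:one_time_fault}). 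Neither fact says anything about the other failed edges of $F$, and for an interior vertex $z'$ of the block the path $\pi_A(z',y^B_\bot)$ may perfectly well have some $e'\in F\setminus\{e\}$ as its first failed edge. The splicing contradiction you invoke is borrowed from Lemma~\ref{lemma:spanner_union_one_fault_structure}, but that argument is intrinsically a one-fault argument: there, any path that avoids the unique failed edge is automatically fault-free, which is what closes both branches of the case analysis; with $|F|\ge 2$ a spliced walk can avoid $e$ and still contain another failed edge, so no contradiction with the definition of $B$ (nor with the maximality of $(y^B_1,y^B_2)$) arises. In short, the ``one gap edge on $\pi$ with fault-free $A$-segments on both sides'' picture is the single-fault picture, and it does not carry over to blocks; this is the very reason the block machinery exists.

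Note also that this paper does not prove the statement at all --- it is imported from \cite{BCP12} --- and the bypasses actually constructed there and in the appendix here (Lemmas~\ref{lemma:single_block_fast_shortcut}, \ref{lemma:path_block_shortcut} and \ref{lemma:path_path_shortcut}) have a different shape from yours: the two $A$-segments meet at the tail $z$ of the failed edge given by the relevant class (each segment being the fault-free prefix, up to $z$, of a chosen shortest path whose first failed edge is that class), the multiplicative spanner $M$ is used only to cross the \emph{last} edge $(y^B_1,y^B_\bot)$ of the block, and the sum of the two $A$-segments is bounded by pairing two inequalities so that the uncontrolled portion beyond the failed edge cancels, rather than by bounding each segment separately against $d_{G-F}$ as you do. To fix your proof you would either have to prove your existence claim (which I believe is false in general for $f\ge 2$) or switch to this class-based bypass structure.
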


Let us prove a few technical lemmas.

\begin{lemma}
	\label{lemma:one_time_fault}
	For each block $B \neq B^*$, and for each couple of vertices $u,v \in V(\pi)$ such that $y^B_\bot \le u \le v$, we have $\class(u,v) \neq \class(B)$.
\end{lemma}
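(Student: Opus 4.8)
The plan is to argue by contradiction, exploiting the maximality built into the definition of the block $B$. Recall that $B$ is constructed so that, letting $f=\class(B)=\class(x^B_1,y^B_1)$ (a directed failed edge on the path $\pi_A(x^B_1, y^B_1)$), the pair $(y^B_1, y^B_2)$ is the \emph{lexicographically last} pair of vertices of $\pi$ with $\class(y^B_1,y^B_2)=f$, and $y^B_\bot$ is the vertex of $\pi$ immediately following $y^B_1$. So suppose, for contradiction, that there are vertices $u,v\in V(\pi)$ with $y^B_\bot\le u\le v$ and $\class(u,v)=f$. First I would observe that $u\ge y^B_\bot > y^B_1$, so in particular $(u,v)$ is lexicographically strictly larger than $(y^B_1, y^B_2)$ (since $u > y^B_1$ in the ordering $<$ on $V(\pi)$ induced by distances in $A$, which is what the lexicographic order compares first). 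This already contradicts the maximality of $(y^B_1,y^B_2)$ among pairs of class $f$ --- unless $u = y^B_1$, which is ruled out by $u \ge y^B_\bot > y^B_1$.

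The one subtlety to nail down is that the ordering relation ``$u\ge v$'' on $V(\pi)$ is the one defined via $d_A(s,\cdot)$, not the order along $\pi$, and that $y^B_\bot$ following $y^B_1$ along $\pi$ indeed forces $y^B_\bot > y^B_1$ in this order. I would handle this by noting that if $y^B_\bot$ followed $y^B_1$ on $\pi$ but satisfied $y^B_\bot \le y^B_1$ in the $A$-distance order, one could still check the class relation directly; more cleanly, $y^B_\bot$ is by construction the vertex following $y^B_1$, and the block decomposition in \cite{BCP12} ensures the relevant monotonicity. So the argument reduces to: $y^B_\bot \le u$ and $y^B_\bot > y^B_1$ give $u > y^B_1$, hence $(u,v) > (y^B_1, y^B_2)$ lexicographically, contradicting that $(y^B_1,y^B_2)$ is the last such pair.

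I expect the main obstacle to be purely bookkeeping: making precise the interaction between the two orders (position along $\pi$ vs.\ the $<$ relation defined through $d_A(s,\cdot)$) and confirming that ``lexicographically last pair of class $f$'' is compared exactly in the $<$ order on first coordinates, so that $u > y^B_1$ genuinely forces the contradiction regardless of where $v$ sits. Once that is pinned down, the proof is a one-line contradiction with the definition of $B$, with no computation involved.
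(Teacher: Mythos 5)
Your overall skeleton is the intended one---the paper's own proof is literally the single sentence that the claim ``follows immediately from how $\pi$ is partitioned into blocks'', and unfolding that into a contradiction with the maximality of the lexicographically last pair $(y^B_1,y^B_2)$ is the right idea. But as written your argument has a concrete mismatch with the paper's definitions: you set $f=C(B)=C(x^B_1,y^B_1)$, whereas the paper defines $C(B)=C(x^B_1,y^B_\bot)$, and $C(x^B_1,y^B_1)$ is a genuinely different object---it can even be $\Phi$ for a block $B\neq B^*$ (that is exactly the hypothesis of Lemma~\ref{lemma:single_block_fast_shortcut}), while the class that defines $B$ is introduced as $f=C(x^B_1,x^B_2)=C(y^B_1,y^B_2)$. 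Your contradiction shows that no pair $(u,v)$ with $y^B_\bot\le u\le v$ has class $f$; to obtain the lemma as stated you must additionally argue that $C(B)$ coincides with $f$, or read $C(B)$ as the defining class of the block (which is how the authors themselves use it, e.g.\ in the proof of Lemma~\ref{lem:multifault}). As it stands, the proof targets a class that need not be the one appearing in the statement.

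The second gap is the ordering step you explicitly defer. The comparisons ``$\le$'', ``smallest'' and ``lexicographically last'' are all defined via $d_A(s,\cdot)$, while $y^B_\bot$ is defined as the successor of $y^B_1$ along $\pi$, a shortest path of $G-F$; since $d_A(s,\cdot)$ need not be monotone along $\pi$, the inequality $y^B_\bot>y^B_1$ (and hence $u>y^B_1$ for every $u\ge y^B_\bot$) is precisely the nontrivial content of the lemma, and ``the block decomposition in \cite{BCP12} ensures the relevant monotonicity'' is an appeal rather than an argument. Either prove that monotonicity, or fix the convention that these comparisons are taken with respect to positions along $\pi$---under that convention your one-line contradiction does go through. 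To be fair, the paper supplies no more detail than you do, so your plan matches its (implicit) proof; but as a standalone argument these two points need to be closed.
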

\begin{proof}
It follows immediately from how $\pi$ is partitioned into blocks.
\end{proof}

\begin{lemma}
	\label{lemma:single_block_fast_shortcut}
	For each block $B \neq B^*$ such that $\class(x^B_1, y^B_1) = \Phi$ we have: \linebreak $d_{H-F}(x^B_1, y^B_\bot) \le d_{G-F}(x^B_1, y^B_\bot) + \beta + \alpha -1$.
\end{lemma}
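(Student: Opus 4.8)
The plan is to exhibit an explicit fault-free $x^B_1$--$y^B_\bot$ path in $H$ of the required length, by first travelling from $x^B_1$ to $y^B_1$ entirely inside $A$, and then taking a single extra edge to reach $y^B_\bot$. First, since by hypothesis $\class(x^B_1, y^B_1) = \Phi$, the shortest path $\pi_A(x^B_1, y^B_1)$ contains no failed edge; as $E(A) \subseteq E(H)$, it is therefore a path of $H - F$, so
\[
d_{H-F}(x^B_1, y^B_1) \le d_A(x^B_1, y^B_1) \le d_G(x^B_1, y^B_1) + \beta,
\]
where the last inequality is the defining property of the $\beta$-additive spanner $A$.

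Second, I would record the positions of the relevant vertices along $\pi = \pi_{G-F}(s,t)$: by construction $B = \pi[x^B_1, y^B_\bot]$ and $y^B_\bot$ is the vertex immediately following $y^B_1$ on $\pi$, so $x^B_1$, $y^B_1$, $y^B_\bot$ occur in this order along the shortest path $\pi$ of $G-F$. Hence $d_{G-F}(x^B_1, y^B_\bot) = d_{G-F}(x^B_1, y^B_1) + 1$, and since deleting edges cannot shorten distances, $d_G(x^B_1, y^B_1) \le d_{G-F}(x^B_1, y^B_1) = d_{G-F}(x^B_1, y^B_\bot) - 1$.

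Third, the edge $(y^B_1, y^B_\bot)$ lies on $\pi$ and therefore avoids $F$, so $d_{G-F}(y^B_1, y^B_\bot) = 1$; since $M$ is an $\alpha$-multiplicative $f$-EFT spanner and $|F| \le f$, this gives $d_{M-F}(y^B_1, y^B_\bot) \le \alpha$, hence $d_{H-F}(y^B_1, y^B_\bot) \le \alpha$ because $E(M) \subseteq E(H)$. Combining the three facts with the triangle inequality,
\[
d_{H-F}(x^B_1, y^B_\bot) \le d_{H-F}(x^B_1, y^B_1) + d_{H-F}(y^B_1, y^B_\bot) \le \bigl( d_{G-F}(x^B_1, y^B_\bot) - 1 + \beta \bigr) + \alpha = d_{G-F}(x^B_1, y^B_\bot) + \beta + \alpha - 1,
\]
which is the claim.

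\textbf{Main obstacle.} There is no deep difficulty here beyond bookkeeping; the two points to be careful about are (a) invoking the \emph{$f$-EFT} guarantee of $M$ with the full fault set $F$ (rather than only its standard-spanner guarantee), and (b) justifying that $x^B_1$, $y^B_1$, $y^B_\bot$ genuinely appear in this order on $\pi$ — which is also where the hypothesis $B \neq B^*$ is used, since $y^B_1$ is undefined for $B^*$. Intuitively, the bound improves on the $2\beta + \alpha - 1$ of Lemma~\ref{lemma:single_block_shortcut} precisely because the assumption $\class(x^B_1, y^B_1) = \Phi$ lets us stay fault-free all the way to $y^B_1$ inside $A$, so we never pay the extra additive $\beta$ that the general argument needs in order to route around the failed edge of the block.
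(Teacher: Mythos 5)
Your proof is correct and follows essentially the same route as the paper's: route from $x^B_1$ to $y^B_1$ along the fault-free shortest path of $A$ (guaranteed by $\class(x^B_1,y^B_1)=\Phi$), then cross the non-failed edge $(y^B_1,y^B_\bot)$ via the $f$-EFT spanner $M$ at cost at most $\alpha$, and finish with the same arithmetic using the ordering of $x^B_1, y^B_1, y^B_\bot$ along $\pi_{G-F}(s,t)$. Your write-up even makes explicit the ordering and fault-set bookkeeping that the paper leaves implicit, so no gaps.
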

\begin{proof}
	From the fact that $(y_1, y_\bot) \in E(\pi)$, we know that $(y_1, y_\bot) \not\in F$. Therefore:
\begin{align*}	 
d_{H-F}(x_1,y_\bot) & \le d_{A-F}(x_1,y_1) + d_{M-F}(y_1, y_\bot) = d_{A}(x_1,y_1) + \alpha d_{G-F}(y_1, y_\bot)\\
& \le d_{G}(x_1, y_1) + ( d_{G-F}(y_1, y_\bot) - 1 )+ \beta + \alpha \\ 
& \le d_{G-F}(x_1, y_1) + \beta + \alpha - 1.
\end{align*}
\end{proof}

\begin{figure}[t]
	\centering
	\includegraphics[width=\textwidth]{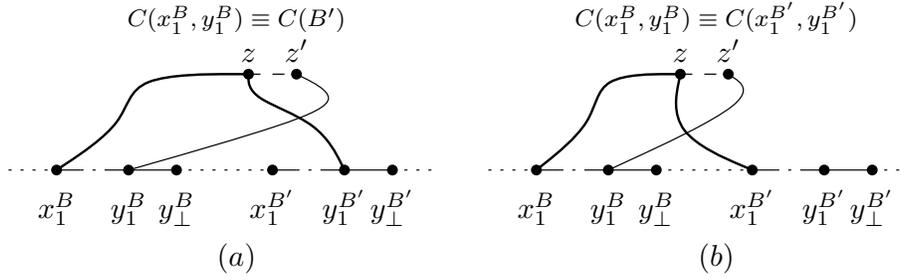}
	\caption{Paths of low additive distortions between successive blocks. The class $C(x^B_1, y^B_1)$ is $(z, z^\prime)$. Bold curves represent fault-free shortest paths in $A$.}\label{fig:spanner_union}
\end{figure}

We now prove \fab{two} useful lemmas that show the existence of \emph{fault-tolerant} paths of low additive distortion between \fab{distinct} blocks:\fabr{Spostato questa frase dopo il lemma 6, che non parla di blocchi distinti. Giusto?}
\begin{lemma}
	\label{lemma:path_block_shortcut}
	For each block $B \neq B^*$ such that (i) $\class(x^B_1, y^B_1) \neq \Phi$, and (ii) there exists a block $B^\prime > B$ satisfying $\class(B^\prime) = \class(x^B_1, y^B_1)$, we have: $d_{H-F}(x^B_1, y^{B^\prime}_\bot) \le d_{G-F}(x^B_1, y^{B^\prime}_\bot) + 2\beta + \alpha -1$.
\end{lemma}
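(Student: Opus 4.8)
\textbf{Proof plan for Lemma~\ref{lemma:path_block_shortcut}.}
The plan is to route from $x^B_1$ to $y^{B'}_\bot$ by using a fault-free shortest path of $A$ to reach the entry point of the failed edge $f := \class(x^B_1, y^B_1) = (z, z')$, then crossing to $z'$ via a single $M$-edge (or rather an $M$-bypass of multiplicative stretch $\alpha$), and finally reaching $y^{B'}_\bot$ by a fault-free shortest path of $A$. Concretely: by hypothesis (i), $\pi_A(x^B_1, y^B_1)$ contains the failed edge $f$, and by the same argument as in Lemma~\ref{lemma:single_block_shortcut}/the $\class$ machinery, the prefix $\pi_A(x^B_1, y^B_1)[x^B_1, z]$ is fault-free (it is the part of $\pi_A(x^B_1,y^B_1)$ before the \emph{first} failed edge, and $f$ is that first failed edge by definition of $\class$). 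Symmetrically, I will use hypothesis (ii): since $\class(B') = \class(x^B_1, y^B_1) = f$, the path $\pi_A(x^{B'}_1, y^{B'}_\bot)$ also has $f$ as its first failed edge; I want the \emph{suffix} after $f$ — i.e. $\pi_A(x^{B'}_1, y^{B'}_\bot)[z', y^{B'}_\bot]$ — to be fault-free. This requires knowing that $\pi_A(x^{B'}_1,y^{B'}_\bot)$ traverses $f$ in the direction $z \to z'$ (the same orientation as in $\pi_A(x^B_1,y^B_1)$); this should follow from the definition of $\class$ as a directed edge, since $\class(B') = f$ means exactly $\class(x^{B'}_1, y^{B'}_\bot) = (z,z')$ with that orientation, and then $f$ being the \emph{first} failed edge of that path makes the suffix after it fault-free.

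Once these two fault-free $A$-subpaths are in hand, the estimate is a routine concatenation. Write $P_1 = \pi_A(x^B_1,y^B_1)[x^B_1,z]$ and $P_2 = \pi_A(x^{B'}_1, y^{B'}_\bot)[z', y^{B'}_\bot]$, and bound
\begin{align*}
d_{H-F}(x^B_1, y^{B'}_\bot) &\le |P_1| + d_{M-F}(z, z') + |P_2| \\
&\le d_A(x^B_1, z) + \alpha\, d_{G-F}(z,z') + d_A(z', y^{B'}_\bot).
\end{align*}
Since $(z,z')$ is the first failed edge on $\pi_A(x^B_1,y^B_1)$ and on $\pi_A(x^{B'}_1,y^{B'}_\bot)$, we have $d_A(x^B_1,z) = d_A(x^B_1, y^B_1) - 1 - d_A(z', y^B_1) \le d_A(x^B_1,y^B_1) - 1$ and similarly $d_A(z', y^{B'}_\bot) \le d_A(x^{B'}_1, y^{B'}_\bot) - 1 - d_A(x^{B'}_1, z)$; more usefully, $d_A(x^B_1,z) + d_A(z',y^{B'}_\bot) \le d_A(x^B_1, y^{B'}_\bot) - 2$ \emph{provided} one can argue $\pi_A(x^B_1,z)$ and $\pi_A(z',y^{B'}_\bot)$ fit inside a shortest $A$-path from $x^B_1$ to $y^{B'}_\bot$ through the edge $(z,z')$ — but it is cleaner to simply use $d_A(x^B_1,z) \le d_{G-F}(x^B_1, z) + \beta$ and $d_A(z', y^{B'}_\bot) \le d_{G-F}(z', y^{B'}_\bot) + \beta$ (additive spanner property), together with $d_{G-F}(z,z') = 1$ (it is an edge of $\pi$, hence not failed), giving
\begin{align*}
d_{H-F}(x^B_1, y^{B'}_\bot) &\le d_{G-F}(x^B_1,z) + \beta + \alpha + d_{G-F}(z', y^{B'}_\bot) + \beta \\
&= d_{G-F}(x^B_1, y^{B'}_\bot) + 2\beta + \alpha - 1,
\end{align*}
where the last step uses $d_{G-F}(x^B_1, z) + 1 + d_{G-F}(z', y^{B'}_\bot) = d_{G-F}(x^B_1, y^{B'}_\bot)$, i.e. that $z, z'$ lie consecutively on $\pi = \pi_{G-F}(s,t)$ between $x^B_1$ and $y^{B'}_\bot$ in that order.

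\textbf{Main obstacle.} The delicate point is precisely this last claimed collinearity: I need that $z$ and $z'$ are consecutive vertices of $\pi_{G-F}(s,t)$ and that they lie between $x^B_1$ and $y^{B'}_\bot$ in the correct order, so that the triangle inequality on $G-F$ becomes an equality along $\pi$. This is where hypotheses (i) and (ii) and the ordering $B' > B$ must be combined with Lemma~\ref{lemma:one_time_fault}: the failed edge $f = \class(B') = \class(x^B_1, y^B_1)$ cannot reappear as $\class(u,v)$ for $u,v$ past $y^B_\bot$, which pins down where $B'$ can sit relative to $B$ and forces $f$ to be "between" the two blocks in the path order. Unwinding the block-decomposition definitions to extract exactly "$z \le z'$ and $x^B_1 \le z$, $z' \le y^{B'}_\bot$ along $\pi$" — as opposed to $z, z'$ being vertices of $\pi$ at all — is the part that will need care; everything after that is bookkeeping with the additive-spanner and multiplicative-spanner inequalities already available.
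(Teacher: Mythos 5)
Your route breaks down at the very point you flagged, but the problem is not a delicate ordering argument: it is that the edge $(z,z')$ is a \emph{failed} edge. By definition, the class of a pair is the first failed edge met along the corresponding shortest path of $A$, so $(z,z')\in F$; in particular it cannot be an edge of $\pi=\pi_{G-F}(s,t)$, and $z,z'$ need not even lie on $\pi$ at all. Hence your claims ``$d_{G-F}(z,z')=1$ (it is an edge of $\pi$, hence not failed)'' and ``$z,z'$ lie consecutively on $\pi$ between $x^B_1$ and $y^{B'}_\bot$'' are false: the only guaranteed connection between $z$ and $z'$ is the failed edge itself, so $d_{G-F}(z,z')$ may be arbitrarily large or infinite, the step $d_{M-F}(z,z')\le\alpha\, d_{G-F}(z,z')=\alpha$ has no basis, and the concluding identity $d_{G-F}(x^B_1,z)+1+d_{G-F}(z',y^{B'}_\bot)=d_{G-F}(x^B_1,y^{B'}_\bot)$ collapses with it. There is a second, independent gap: you use $\pi_A(x^{B'}_1,y^{B'}_\bot)[z',y^{B'}_\bot]$ as a fault-free path, but knowing that $(z,z')$ is the \emph{first} failed edge of that path only makes the prefix up to $z$ fault-free; for $f\ge 2$ the suffix after $z'$ may contain further failed edges, so that leg of your route is also unjustified.

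For contrast, the paper's proof never tries to get from $z$ to $z'$. It routes $x^B_1\to z\to y^{B'}_1$ entirely inside $A-F$, and spends the multiplicative detour on the single edge $(y^{B'}_1,y^{B'}_\bot)$ of $\pi$, which genuinely survives in $G-F$, so that $d_{M-F}(y^{B'}_1,y^{B'}_\bot)\le\alpha$. The $2\beta$ is extracted by summing two inequalities: the decomposition $d_{A-F}(x^B_1,z)+d_A(z,z')+d_{A-F}(z',y^B_1)=d_A(x^B_1,y^B_1)\le d_G(x^B_1,y^B_1)+\beta$ of the $A$-shortest path of $B$ along its first failed edge, and the bound $d_{A-F}(z,y^{B'}_1)=d_A(z,y^{B'}_1)\le d_G(z,y^{B'}_1)+\beta\le d_A(z,z')+d_{A-F}(z',y^B_1)+d_G(y^B_1,y^{B'}_1)+\beta$; the possibly faulty segment from $z'$ to $y^B_1$ appears on both sides and cancels, and the ordering $x^B_1<y^B_1<y^{B'}_1$ along $\pi$ turns $d_G(x^B_1,y^B_1)+d_G(y^B_1,y^{B'}_1)$ into at most $d_{G-F}(x^B_1,y^{B'}_1)$, after which the $M$-detour over $(y^{B'}_1,y^{B'}_\bot)$ costs $\alpha$ against the remaining unit of $d_{G-F}(x^B_1,y^{B'}_\bot)$. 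The structural idea you are missing is therefore that the endpoint $z$ of the failed edge serves only as an intermediate hub reached and left by paths of $A$ avoiding $F$, while the $\alpha$-factor is paid on a surviving edge of $\pi$, never across the fault.
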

\begin{proof}
	Let $\class(B^\prime) = (z, z^\prime)$. We first prove that $d_{A-F}(x^B_1,z) + d_{A-F}(z, y^{B^\prime}_1 \le d_{G-F}(x^B_1, y^{B^\prime}_1) + 2 \beta$. Indeed we must have $x^B_1 < y_B < y^{B^\prime}_1$, so we can write the following two inequalities (see Figure~\ref{fig:spanner_union} (a)): 	
	\begin{equation}
		d_{A-F}(x^B_1, z) + d_A(z, z^\prime) + d_{A-F}(z^\prime, y^B_1) = d_A(x^B_1, y^B_1) \le d_{G}(x^B_1, y^B_1) + \beta,
		\label{eq:path_block_shortcut_1}
	\end{equation} %
	\begin{multline}
		d_{A-F}(z, y^{B^\prime}_1) = d_{A}(z, y^{B^\prime}_1) \le d_G(z, y^{B^\prime}_1) + \beta \\
		 \le d_A(z, z^\prime) + d_{A-F}(z^\prime, y^B_1) + d_G(y^B_1, y^{B^\prime}_1) + \beta.
		 \label{eq:path_block_shortcut_2}
	\end{multline}

	By summing \eqref{eq:path_block_shortcut_1} and \eqref{eq:path_block_shortcut_2}, we obtain:
	\[
		d_{A-F}(x^B_1, z) + d_{A-F}(z, y^{B^\prime}_1) \le d_{G}(x^B_1, y^B_1) + d_G(y^B_1, y^{B^\prime}_1) + 2\beta \le d_{G-F}(x^B_1, y^{B^\prime}_1) + 2\beta.
	\]

	Remember that $d_{M-F}(y^{B^\prime}_1, y^{B^\prime}_\bot) \le \alpha d_{G-F}(y^{B^\prime}_1, y^{B^\prime}_\bot) = \alpha$. We are now ready to prove the claim:
\begin{align*}
		d_{H-F}(x^B_1,y^{B^\prime}_\bot) & \le d_{A-F}(x^B_1,z) + d_{A-F}(z, y^{B^\prime}_1 ) + d_{M-F}(y^{B^\prime}_1, y^{B^\prime}_\bot)\\
		& \le d_{G-F}(x^B_1, y^{B^\prime}_1) + 2\beta + \alpha + ( d_{G-F}(y^{B^\prime}_1, y^{B^\prime}_\bot) - 1 )\\
		& \le d_{G-F}(x^B_1,y^{B^\prime}_\bot) + 2\beta + \alpha - 1.
\end{align*}
\end{proof}

\begin{lemma}
	\label{lemma:path_path_shortcut}
	For each block $B \neq B^*$ such that (i) $\class(x^B_1, y^B_1) \neq \Phi$, and (ii) there exists a block $B^\prime > B$ with $\class(x^{B^\prime}_1, y^{B^\prime}_1) = \class(x^B_1, y^B_1)$, we have: $d_{H-F}(x^B_1, x^{B^\prime}_1) \le d_{G-F}(x^B_1, x^{B^\prime}_1) + 2 \beta$.
\end{lemma}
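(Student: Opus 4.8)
The plan is to produce a fault-free path from $x^B_1$ to $x^{B'}_1$ in $H-F$ by routing through the tail $z$ of the failed edge $(z,z')$ that is common to the two blocks, in the spirit of the proof of Lemma~\ref{lemma:path_block_shortcut}. The reason this yields an additive term of only $2\beta$, rather than the $2\beta+\alpha-1$ of Lemma~\ref{lemma:path_block_shortcut}, is that $x^{B'}_1$ sits \emph{before} the failed edge of block $B'$ along $\pi_A(x^{B'}_1,y^{B'}_1)$, so it is reachable from $z$ inside $A$ without crossing a failed edge; no multiplicative-spanner hop is needed at the end.

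In detail, set $(z,z'):=\class(x^B_1,y^B_1)$, viewed as a directed edge so that $d_A(x^B_1,z)<d_A(x^B_1,z')$. By hypothesis~(ii), $\class(x^{B'}_1,y^{B'}_1)$ is the same directed edge, hence $d_A(x^{B'}_1,z)<d_A(x^{B'}_1,z')$ as well. Since $(z,z')$ is the first failed edge met when traversing $\pi_A(x^B_1,y^B_1)$ from $x^B_1$ (respectively, $\pi_A(x^{B'}_1,y^{B'}_1)$ from $x^{B'}_1$), the portion of that path up to $z$ is fault-free; therefore $d_{A-F}(x^B_1,z)=d_A(x^B_1,z)$ and $d_{A-F}(z,x^{B'}_1)=d_A(z,x^{B'}_1)$. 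Concatenating the fault-free $A$-path from $x^B_1$ to $z$ with the fault-free $A$-path from $z$ to $x^{B'}_1$ gives a fault-free walk in $H-F$, so
\[
 d_{H-F}(x^B_1,x^{B'}_1)\ \le\ d_A(x^B_1,z)+d_A(z,x^{B'}_1),
\]
and it only remains to bound the right-hand side by $d_{G-F}(x^B_1,x^{B'}_1)+2\beta$.

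For this I would mimic the two inequalities in the proof of Lemma~\ref{lemma:path_block_shortcut}, with $y^{B'}_1$ replaced throughout by $x^{B'}_1$. Since $(z,z')$ lies on the shortest $A$-path $\pi_A(x^B_1,y^B_1)$, one has $d_A(x^B_1,z)+1+d_A(z',y^B_1)=d_A(x^B_1,y^B_1)\le d_G(x^B_1,y^B_1)+\beta$. Independently, bounding $d_A(z,x^{B'}_1)\le d_G(z,x^{B'}_1)+\beta$ and routing from $z$ to $x^{B'}_1$ in $G$ via $z'$, then along a shortest $A$-path to $y^B_1$ (so $d_G(z',y^B_1)\le d_A(z',y^B_1)$), then along $\pi$ to $x^{B'}_1$, gives $d_A(z,x^{B'}_1)\le 1+d_A(z',y^B_1)+d_G(y^B_1,x^{B'}_1)+\beta$. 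Adding these two inequalities cancels the $d_A(z',y^B_1)$ term and the $\pm1$, leaving
\[
 d_A(x^B_1,z)+d_A(z,x^{B'}_1)\ \le\ d_G(x^B_1,y^B_1)+d_G(y^B_1,x^{B'}_1)+2\beta.
\]
Finally, from the block decomposition $x^B_1<y^B_1<y^B_\bot\le x^{B'}_1$ (using hypothesis~(i), the block definition, $B'>B$, and that consecutive blocks share their common endpoint), the vertices $x^B_1$, $y^B_1$, $x^{B'}_1$ occur in this order along $\pi=\pi_{G-F}(s,t)$; hence $d_G(x^B_1,y^B_1)+d_G(y^B_1,x^{B'}_1)\le d_{G-F}(x^B_1,y^B_1)+d_{G-F}(y^B_1,x^{B'}_1)=d_{G-F}(x^B_1,x^{B'}_1)$, and the claimed estimate follows.

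I do not expect a genuine obstacle here: the argument is a streamlined version of Lemma~\ref{lemma:path_block_shortcut} in which the final multiplicative hop is dropped. The one point to handle carefully is that hypothesis~(ii), read as an equality of \emph{directed} class-edges, is precisely what guarantees fault-free shortest $A$-paths from \emph{both} $x^B_1$ and $x^{B'}_1$ into the common vertex $z$, and on the same side of $(z,z')$; and, as in Lemma~\ref{lemma:path_block_shortcut}, one implicitly relies on the ordering $<$ of the block endpoints (defined via $d_A(s,\cdot)$) being consistent with the order along $\pi$, so that the last triangle inequality is legitimate.
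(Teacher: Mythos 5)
Your proposal is correct and follows essentially the same route as the paper's own proof: route through the tail $z$ of the common (directed) class edge via the fault-free prefixes of $\pi_A(x^B_1,y^B_1)$ and $\pi_A(x^{B'}_1,y^{B'}_1)$, combine the two triangle-type inequalities so that the $d_A(z',y^B_1)$ terms cancel, and use the ordering $x^B_1\le y^B_1<x^{B'}_1$ along $\pi_{G-F}(s,t)$. If anything, your consistent use of $d_A$ (rather than mixing in $d_{A-F}$) and your explicit justification of the fault-free prefixes make the argument slightly cleaner than the paper's write-up, but the substance is identical.
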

\begin{proof}
	Let $\class(B^\prime) = (z, z^\prime)$. We can write the following inequalities (see Figure~\ref{fig:spanner_union} (b)):
	\begin{equation}
		d_{A-F}(x^B_1, z) + d_{A}(z, z^\prime) + d_{A-F}(z^\prime, y^B_1) = d_A(x^B_1, y^B_1) \le d_G(x^B_1, y^B_1) + \beta,
		\label{eq:path_path_shortcut_1}
	\end{equation}
	\begin{align}
		d_{A-F}(z, x^{B^\prime}_1) & = d_{A}(z, x^{B^\prime}_1) \le d_G(z, x^{B^\prime}_1)\nonumber \\
		 & \le d_A(z, z^\prime) + d_{A-F}(z^\prime, y^B_1) + d_G(y^B_1, x^{B^\prime}_1) + \beta.
		\label{eq:path_path_shortcut_2}
	\end{align}
	
	By summing \eqref{eq:path_path_shortcut_1} and \eqref{eq:path_path_shortcut_2} we obtain: $d_{A-F}(x^B_1, z) + d_{A-F}(z, x^{B^\prime}_1)  \le d_G(x^B_1, y^B_1) + d_G(y^B_1, x^{B^\prime}_1) + 2\beta $ which implies $d_{A-F}(x^B_1, x^{B^\prime}_1) \le d_{G-F}(x^B_1, x^{B^\prime}_1) + 2\beta$, and hence the claim.
\end{proof}

We are now ready to prove the main result of this section:
\begin{lemma}\label{lem:multifault}
Let $A$ be a $\beta$-additive spanner of $G$, and let \fab{$M$} be an $\alpha$-multiplicative $f$-EFT spanner of $G$. The graph $H=(V(G), E(A) \cup E(\fab{M}))$ is a $(2\beta+\alpha-1)$-additive $f$-EFT spanner of $G$. 
\end{lemma}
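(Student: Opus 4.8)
The plan is to decompose the fault-free shortest path $\pi=\pi_{G-F}(s,t)$ into the block sequence $\langle B_1,\dots,B_k,B^*\rangle$ described above, and then to walk from $s$ to $t$ by jumping from block endpoint to block endpoint, at each step paying only an $O(\beta+\alpha)$ additive overhead \emph{but spread over many consecutive blocks}, so that the total overhead telescopes to the single term $2\beta+\alpha-1$ rather than accumulating once per block. The key structural observation is the one isolated in Lemma~\ref{lemma:one_time_fault}: once we have passed the endpoint $y^B_\bot$ of a block $B$, the class $\class(B)$ never recurs. This means the "bad" failed edge responsible for block $B$ is dealt with once and for all, which is exactly what lets us avoid re-paying for it.

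First I would handle the trivial/base cases: if $\class(s,t)=\Phi$ then $\pi_A(s,t)$ is fault-free and $d_{H-F}(s,t)\le d_A(s,t)\le d_{G-F}(s,t)+\beta$, done; and if there is only the block $B^*$, likewise. Otherwise I proceed along the blocks. The heart of the argument is a charging/merging scheme on the first block $B=B_1$ with $\class(B)=f\neq\Phi$. I distinguish according to whether $\class(x^B_1,y^B_1)=\Phi$ or not. If $\class(x^B_1,y^B_1)=\Phi$, then Lemma~\ref{lemma:single_block_fast_shortcut} gives a bypass of $B$ with overhead only $\beta+\alpha-1$, and I would like to recurse on $\pi[y^B_\bot,t]$; but a naive recursion would still add $\beta+\alpha-1$ per block. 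Instead, the point is that the "cheap" bypass reaches a vertex $y^B_\bot$ from which, by Lemma~\ref{lemma:one_time_fault}, the class $f$ is gone, and the overhead $\beta+\alpha-1$ here is to be \emph{absorbed} into the final bound — so really only one such cheap block can occur "for free," and one must argue the remaining overhead is $\beta$ (coming from the last, fault-free stretch $B^*$). If instead $\class(x^B_1,y^B_1)=g\neq\Phi$, then either some later block $B'>B$ has $\class(B')=g$, in which case Lemma~\ref{lemma:path_block_shortcut} (or Lemma~\ref{lemma:path_path_shortcut}) lets me skip directly from $x^B_1$ all the way past $B'$ with overhead $2\beta+\alpha-1$ (resp. to $x^{B'}_1$ with overhead $2\beta$), thereby collapsing all the blocks in between; or no such block exists, and I appeal to Lemma~\ref{lemma:single_block_shortcut}. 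In every case the crucial accounting is that after one application we reach a vertex on $\pi$ from which \emph{at least one fewer distinct class remains active}, and the residual walk is fault-free once no classes remain, contributing only $\beta$.

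Concretely I would set up an induction on the number of distinct classes $\{\class(B_i)\}$ appearing along $\pi$, with induction hypothesis "from any $x^{B_i}_1$ to $t$ the distortion in $H-F$ is at most (sum of the single '$2\beta$'/'$\beta$' increments of the surviving steps) $+\,\beta$," and show the increments never stack: the one step that uses a "$+\alpha$" or a "$+\beta+\alpha-1$" or a "$+2\beta+\alpha-1$" is the \emph{last} nontrivial step before the fault-free tail, so the grand total is bounded by $2\beta+\alpha-1$. The main obstacle — and where the paper's Figure~\ref{fig:spanner_union} and the inter-block shortcuts really matter — is precisely this "no stacking" claim: one has to verify that whenever we are forced to use a costly intra-block bypass (Lemma~\ref{lemma:single_block_shortcut}, cost $2\beta+\alpha-1$) rather than a cheap inter-block shortcut, the remaining suffix of $\pi$ is already fault-free, so nothing is charged afterward; and conversely that the cheap shortcuts (cost $2\beta$ or $\beta+\alpha-1$) can be chained without the $\beta$'s adding up, which is where Lemmas~\ref{lemma:path_block_shortcut} and~\ref{lemma:path_path_shortcut} are used to \emph{jump over} intervening blocks rather than bypassing each one. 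Getting the case analysis on $(z,z')=\class(B')$ versus $\class(x^B_1,y^B_1)$ exactly right, and confirming the endpoints line up so the triangle inequalities close, is the delicate part; the individual inequalities themselves are the routine $d_A\le d_G+\beta$ and $d_M\le\alpha d_G$ substitutions already rehearsed in the technical lemmas.
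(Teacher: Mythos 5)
There is a genuine gap, and it sits exactly where you yourself flag ``the delicate part''. Your ingredients coincide with the paper's (the block decomposition of $\pi_{G-F}(s,t)$ and Lemmas~\ref{lemma:single_block_shortcut}, \ref{lemma:one_time_fault}, \ref{lemma:single_block_fast_shortcut}, \ref{lemma:path_block_shortcut}, \ref{lemma:path_path_shortcut}), but your accounting does not: you assert that the per-block overheads ``telescope'' to a single $2\beta+\alpha-1$, resting on two claims you never establish, namely (a) that the expensive bypass of Lemma~\ref{lemma:single_block_shortcut} is only ever needed when the remaining suffix of $\pi$ is already fault-free, and (b) that the cheaper shortcuts of Lemmas~\ref{lemma:path_block_shortcut} and~\ref{lemma:path_path_shortcut} can be chained without their $2\beta$ terms accumulating. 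Both fail once $f\ge 2$: Lemma~\ref{lemma:one_time_fault} only guarantees that the class of the block just passed never recurs, and says nothing about the other (up to $2f$) directed classes of failed edges, so after an expensive intra-block bypass the suffix $\pi[y^B_\bot,t]$ may still contain many further blocks, each of which can again force Lemma~\ref{lemma:single_block_shortcut} (when its $(x_1,y_1)$-class matches no later block), and every invocation of Lemma~\ref{lemma:path_path_shortcut} costs a fresh $2\beta$. The overheads genuinely add up; what controls them is not telescoping but amortization over the at most $2f$ classes.

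Consequently the bound you aim for is stronger than what the paper proves, and the ``$(2\beta+\alpha-1)$'' in the statement is a misprint: the paper's own proof of this lemma concludes $d_{H-F}(s,t)\le d_{G-F}(s,t)+2f(\beta+\alpha-1)+\beta$, consistent with Theorem~\ref{thm:multifault}, while the constant bound $2\beta+\alpha-1$ is obtained only for $f=1$ by the separate, much simpler argument of Lemma~\ref{lem:onefault} via Lemma~\ref{lemma:spanner_union_one_fault_structure}. The correct accounting is a good/bad charging scheme: there are at most $2f$ directed classes, hence $k\le 2f$ blocks; good shortcuts (Lemmas~\ref{lemma:single_block_fast_shortcut} and~\ref{lemma:path_block_shortcut}) cost at most $\beta+\alpha-1$ per block skipped, bad shortcuts (Lemmas~\ref{lemma:single_block_shortcut} and~\ref{lemma:path_path_shortcut}) cost at most $2\beta+\alpha-1$ each but consume a class matched by no block, so their number is at most $2f-k$; balancing the two contributions gives at most $2f(\beta+\alpha-1)$ overall, plus $\beta$ for the fault-free tail $B^*$. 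If you rewrite your induction to track this class budget rather than the false ``the costly step is the last nontrivial step'' hypothesis, you recover the paper's proof; with these lemmas alone, the $f$-independent constant is out of reach.
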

\begin{proof}
Let $s,t\in V(G)$ be any two vertices, \fab{$F$ be any subset of at most $f$ edges}, and $\pi = \pi_{G-F}(s,t)$ \fab{(that we assume to exist)}. Consider the \fab{above} decomposition of $\pi$ into blocks $\langle B_1, B_2, \dots, B_k, B^* \rangle$.

We define a block $B$ to be \emph{traversed} by a subpath $\pi^\prime$ of $\pi$ if $E(\pi^\prime) \cap E(B) \neq \emptyset$.
	A \emph{shortcut} is a fault-tolerant path in $H$ between the vertex $x^B_1$ of a block $B$ and the vertex $x_1^{B^\prime}$ of a subsequent block $B^\prime > B$. We will construct a path from $s$ to $t$ in $H-F$ by building a shortcut from the first block of $\pi$ to another block $B^\prime$. Then, we restrict our path to $\pi[x^{B^\prime}_1, t]$ and repeat the above until we reach $x^{B^*}_1$.  We say that the shortcut from $x^B_1$ to $x^{B^\prime}_1$ \emph{skips} the blocks that traverse the path $\pi[x^B_1, x^{B^\prime}_1]$. Each block $B \neq B^*$ is skipped exactly once.

	We say that a shortcut is \emph{bad} if it is built as shown by Lemma~\ref{lemma:single_block_shortcut} or by Lemma~\ref{lemma:path_path_shortcut}, and \emph{good} otherwise. Intuitively, a bad shortcut involves a class that does not correspond to the class of any block of $\pi$.
	
	Let $B$ the first traversed block of the path $\pi^\prime$ we are considering (initially $\pi^\prime = \pi$ and hence $B=B_1$). We find the sought shortcut in the following way:
	
	If $\class(x^B_1, y^B_1) = \Phi$, then we reach the vertex $x^{B^\prime}_1$ of the following block $B^\prime$ with a (good) shortcut as shown by Lemma~\ref{lemma:single_block_fast_shortcut} (remember that $y^B_\bot = x^{B^\prime}_1$). This shortcut has an additive error of at most $\beta + \alpha - 1$ w.r.t.\ $d_{G-F}(x^B_1, x^{B^\prime}_1)$, and skips $1$ block.
		
	 Otherwise, we search for a block $B^\prime \ge B$ such that $\class(x^B_1, y^B_1) = \class(B^\prime)$. If such a block is found, we build a (good) shortcut from $x^B_1$ to $y^{B^\prime}_\bot$ ah shown by Lemma~\ref{lemma:path_block_shortcut}. This shortcut has an additive error of at most $2\beta + \alpha - 1$ w.r.t.\ $d_{G-F}(x^B_1, y^{B^\prime}_1)$, \fab{and it skips} at least $2$ blocks as we must necessarily have $B^\prime \neq B$ (since otherwise $\class(x^B_1, y^B_1) = \class(x^B_1, x^B_2) \fab{=} \class(y^B_1, y^B_2)$ which contradicts the definition of $B$).
	
	If no such block can be found, then we search for the last block $B^\prime > B$ such that $\class(x^B_1, y^B_1) = \class(x^{B^\prime}_1, y^{B^\prime}_1)$. If $B^\prime$ exists, then we build a  (bad) shortcut from $x^B_1$ to $x^{B^\prime_1}$ as shown by Lemma~\ref{lemma:path_path_shortcut}. This shortcut has an additive error of at most $2\beta$ w.r.t\ $d_{G-F}(x^B_1, x^{B^\prime}_1)$, and skips at least $1$ block.
	
	Finally, if none of the previous cases apply, we build a (bad) shortcut between $x^B_1$ and $y^B_\bot$ as shown by Lemma~\ref{lemma:single_block_shortcut}. This shortcut has an additive error of at most $2\beta + \alpha - 1$ w.r.t\ $d_{G-F}(x^B_1, y^B_\bot)$, and skips $1$ block.
	
	We define $\lambda$ to be the total number of bad shortcuts built for the whole path $\pi$.
	Notice that, by Lemma~\ref{lemma:one_time_fault}, $k$ must be at most $2f$. Moreover, each time we build a bad shortcut from a block $B$ we know that no block of $\pi$ has class $\class(B)$ and that we will not encounter the edge of $\class(B)$ in any following path between two vertices $x_1$ and $y_1$.
	Therefore $\lambda \le 2f - k$. Clearly the number of good shortcuts is at most $k-\lambda$.

	For each block skipped by a good shortcut we incur an additive error of at most $\beta + \alpha -1$, while for each block skipped by a bad shortcut we incur an additive error of at most $2\beta + \alpha - 1$.
	Hence, we have that $d_{H-F}(s, x^{B^*}_1)  \le d_{G-F}(s, x^{B^*}_1) + \gamma$ where 
\begin{align*}
\gamma & \le \max_{0 \le \lambda \le 2f - k} \left( \lambda (2\beta + \alpha -1) + (k-\lambda)(\beta+\alpha-1)  \right)\\
& = \max_{0 \le \lambda \le 2f - k} \left( \lambda \beta + k (\beta + \alpha -1 ) \right) = 2f \beta + k ( \alpha -1 ) \le 2f (\beta + \alpha -1).
\end{align*}
We can finally write:
\begin{align*}
d_{H-F}(s, t) & \le d_{H-F}(s, x^{B^*}_1) + d_{H-F}(x^{B^*}_1,t)\\
& \le d_{G-F}(s, x^{B^*}_1) + 2f(\beta + \alpha -1) + d_{G-F}(x^{B^*}_1,t) +\beta\\
& \le d_{G-F}(s,t) + 2f(\beta + \alpha -1) + \beta.
\end{align*}
\end{proof}

It turns out that for the special case of $f=1$, we can provide a better upper bound on the distortion provided by $H$. However, this refinement currently does not imply any better additive spanner w.r.t. the ones described in Section \ref{sec:augmenting_clustering_spanners}.

\begin{lemma}
	\label{lem:onefault}
	Let $A$ be a $\beta$-additive spanner of $G$, and let \fab{$M$} be an $\alpha$-multiplicative EFT-spanner of $G$. The graph $H=(V(G), E(A) \cup E(\fab{M}))$ is a $(2\beta+\alpha-1)$-additive EFT spanner of $G$. 
\end{lemma}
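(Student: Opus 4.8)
The plan is to isolate a single edge of $\pi_{G-e}(s,t)$ whose deletion ``splits'' the shortest path into two pieces along which $A$ is already fault-free, and then to bridge that one edge using $M$. Fix $s,t\in V(G)$ and a failed edge $e$; we may assume $s$ and $t$ are connected in $G-e$, since otherwise there is nothing to prove. If some shortest path between $s$ and $t$ in $A$ avoids $e$, then, as $E(A)\subseteq E(H)$, we get $d_{H-e}(s,t)\le d_A(s,t)\le d_G(s,t)+\beta\le d_{G-e}(s,t)+\beta$, which is already within the claimed bound. So from now on assume that $e$ lies on \emph{every} shortest path between $s$ and $t$ in $A$.

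Under this assumption I would first extract two consecutive vertices $z,z'$ of $\pi:=\pi_{G-e}(s,t)$, with $z$ preceding $z'$ (so $d_{G-e}(s,z)+1=d_{G-e}(s,z')$), such that every shortest path in $A$ between $s$ and $z$, and every shortest path in $A$ between $t$ and $z'$, avoids $e$. When $d_{A-e}(s,t)\neq\infty$ this is precisely Lemma~\ref{lemma:spanner_union_one_fault_structure}. When $d_{A-e}(s,t)=\infty$, the edge $e=(x,y)$ is a bridge of $A$; write $A-e$ as the disjoint union of the part $V_x$ containing $x$ and the part $V_y$ containing $y$, and observe that $s\in V_x$ and $t\in V_y$, because (naming the endpoints of $e$ suitably) the shortest $A$-path from $s$ to $t$ crosses $e$ from $x$ to $y$. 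Taking $z$ to be the last vertex of $\pi$ lying in $V_x$ and $z'$ its successor (which then lies in $V_y$, since $t\in V_y$), both $\pi_A(s,z)$ and $\pi_A(t,z')$ avoid $e$ simply because a simple path cannot traverse the bridge $e$ twice.

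Given such $z,z'$, the bound follows from a three-term triangle inequality. Indeed $d_{H-e}(s,z)\le d_{A-e}(s,z)=d_A(s,z)\le d_G(s,z)+\beta\le d_{G-e}(s,z)+\beta$, and symmetrically $d_{H-e}(z',t)\le d_{G-e}(z',t)+\beta$; moreover $(z,z')$ is an edge of $G-e$, so $d_{H-e}(z,z')\le d_{M-e}(z,z')\le\alpha\cdot d_{G-e}(z,z')=\alpha$. Hence
\[
d_{H-e}(s,t)\le d_{H-e}(s,z)+d_{H-e}(z,z')+d_{H-e}(z',t)\le \bigl(d_{G-e}(s,z)+d_{G-e}(z',t)\bigr)+2\beta+\alpha,
\]
and substituting $d_{G-e}(s,z)+d_{G-e}(z',t)=d_{G-e}(s,t)-1$ yields $d_{H-e}(s,t)\le d_{G-e}(s,t)+2\beta+\alpha-1$.

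The main obstacle is the degenerate case $d_{A-e}(s,t)=\infty$, which Lemma~\ref{lemma:spanner_union_one_fault_structure} does not cover as stated; the bridge argument above repairs it (equivalently, one could simply drop the finiteness hypothesis from that lemma, whose proof goes through verbatim once one recalls that a simple path cannot use $e$ twice). It is worth stressing that this $f=1$ estimate does \emph{not} fall out of the block decomposition of Lemma~\ref{lem:multifault}, which for $f=1$ would only give $3\beta+2\alpha-2$: the improvement hinges on the fact that a single fault admits a genuinely two-piece bypass around one well-chosen edge, rather than a chain of per-block bypasses.
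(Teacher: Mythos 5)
Your proof is correct and follows essentially the same route as the paper's: dispose of the case where some shortest $s$--$t$ path in $A$ avoids $e$, otherwise invoke Lemma~\ref{lemma:spanner_union_one_fault_structure} to obtain the consecutive pair $z,z'$ on $\pi_{G-e}(s,t)$ with fault-free shortest $A$-paths to $s$ and $t$, and bridge the single edge $(z,z')$ through $M$ to get $2\beta+\alpha-1$. Your explicit treatment of the degenerate case $d_{A-e}(s,t)=\infty$ (via the bridge argument, or equivalently by noting that the finiteness hypothesis in Lemma~\ref{lemma:spanner_union_one_fault_structure} is never used in its proof) is a valid patch of a detail the paper's case split glosses over.
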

\begin{proof}
	Fix any two vertices $s,t \in \fab{V(G)}$ and a failed edge $\fab{e} \in E(G)$. If $d_A(s,t) = d_{A-\fab{e}}(s,t)$, then we are done as $d_{H-\fab{e}}(s,t) \le d_{A-\fab{e}}(s,t) = d_{A}(s,t) \le d_G(s,t) + \beta \le d_{G-\fab{e}}(s,t) + \beta$.
	
	Otherwise, choose two vertices $z$ and $z^\prime$ in $\pi_{G-\fab{e}}(s,t)$ according to Lemma~\ref{lemma:spanner_union_one_fault_structure}. Notice that $(z, z^\prime)$ belongs to $G-\fab{e}$, hence $d_{\fab{M}-\fab{e}}(z, z^\prime) \le \alpha \cdot d_{G-\fab{e}}(z, z^\prime) = \alpha$. We have:
	\begin{align*}
		d_{H-\fab{e}}(s,t) & \le d_{A}(s,z) + d_{\fab{M}-\fab{e}}(z,z^\prime) + d_{A}(z^\prime, t) 
		 \le  d_G(s,z) + \beta + \alpha + d_G(z^\prime, t) + \beta \\
		& \le  d_{G-\fab{e}}(s,z) + (d_{G-\fab{e}}(z,z^\prime) -1 ) + d_{G-\fab{e}}(z^\prime, t) +  2\beta  + \alpha \\
		& = d_{G-\fab{e}}(s,t) + 2\beta + \alpha -1
	\end{align*}
\end{proof}

Theorem \ref{thm:multifault} follows immediately from Lemmas \ref{lem:multifault} and \ref{lem:onefault}.

\end{document}